 \font\tenmsb=msbm10 scaled\magstep
\newcommand{\id}{I}
\def\ket#1{| #1 \rangle}
\def\bra#1{\langle #1 |}
\def\kb#1#2{|#1\rangle\!\langle #2 |}
\def\bk#1#2{\langle #1 |#2\rangle}
\newcommand{\qw}[1][-1]{\ar @{-} [0,#1]}
\newcommand{\qwx}[1][-1]{\ar @{-} [#1,0]}
\newcommand{\gate}[1]{*+<.6em>{#1} \POS ="i","i"+UR;"i"+UL **\dir{-};"i"+DL **\dir{-};"i"+DR **\dir{-};"i"+UR **\dir{-},"i" \qw}
\newcommand{\control}{*!<0em,.025em>-=-<.2em>{\bullet}}
\newcommand{\ctrl}[1]{\control \qwx[#1] \qw}
\newcommand{\targ}{*+<.02em,.02em>{\xy ="i","i"-<.39em,0em>;"i"+<.39em,0em> **\dir{-}, "i"-<0em,.39em>;"i"+<0em,.39em> **\dir{-},"i"*\xycircle<.4em>{} \endxy} \qw}
\newcommand{\multigate}[2]{*+<1em,.9em>{\hphantom{#2}} \POS [0,0]="i",[0,0].[#1,0]="e",!C *{#2},"e"+UR;"e"+UL **\dir{-};"e"+DL **\dir{-};"e"+DR **\dir{-};"e"+UR **\dir{-},"i" \qw}
\newcommand{\ghost}[1]{*+<1em,.9em>{\hphantom{#1}} \qw}
\newcommand{\push}[1]{*{#1}}
\newcommand{\gategroup}[6]{\POS"#1,#2"."#3,#2"."#1,#4"."#3,#4"!C*+<#5>\frm{#6}}
\newcommand{\rstick}[1]{*!L!<-.5em,0em>=<0em>{#1}}
\newcommand{\lstick}[1]{*!R!<.5em,0em>=<0em>{#1}}
\newcommand{\Qcircuit}{\xymatrix @*=<0em>}
\newcommand{\tr}[1]{\text{Tr} #1}
\newcommand{\m}{\mathcal }
\newcommand{\diag}{{\rm diag}}   
\def\C{\mathbb C}    
\def\M{\mathbb M}    
\newtheorem{theorem}{Theorem}
\newtheorem{definition}[theorem]{Definition}
\newtheorem{corollary}[theorem]{Corollary}
\newtheorem{ex}[theorem]{Example} 
\begin{document}

\title[Exploring The Complementarity of Quantum Privacy and Error Correction]{Quantum Subsystems: Exploring the Complementarity of Quantum Privacy and Error Correction}

\author{Tomas Jochym-O'Connor}

\affiliation{Institute for Quantum Computing, University of Waterloo, Waterloo, Ontario, Canada}
\affiliation{Department of Physics \& Astronomy, University of Waterloo, Waterloo, Ontario, Canada}

\author{David W. Kribs}

\affiliation{Institute for Quantum Computing, University of Waterloo, Waterloo, Ontario, Canada}
\affiliation{Department of Mathematics \& Statistics, University of Guelph, Guelph, Ontario, Canada}

\author{Raymond Laflamme}

\affiliation{Institute for Quantum Computing, University of Waterloo, Waterloo, Ontario, Canada}
\affiliation{Department of Physics \& Astronomy, University of Waterloo, Waterloo, Ontario, Canada}
\affiliation{Perimeter Institute for Theoretical Physics, Waterloo, Ontario, Canada}

\author{Sarah Plosker}

\affiliation{Department of Mathematics \& Computer Science, Brandon University, Brandon, Manitoba, Canada}

\keywords{quantum subsystems, private quantum codes, quantum error correcting codes, quantum cryptography, quantum error correction, completely positive maps, private quantum channels, complementary channels.}

\begin{abstract}
This paper addresses and expands on the contents of the recent Letter [\emph{Phys. Rev. Lett.} \textbf{111}, 030502 (2013)] discussing private quantum subsystems. Here we prove several previously presented results, including a condition for a given random unitary channel to \emph{not} have a private subspace (although this does not mean that private communication cannot occur, as was previously demonstrated via private subsystems) and algebraic conditions that characterize when a general quantum subsystem or subspace code is private for a quantum channel. These conditions can be regarded as the private analogue of the Knill-Laflamme conditions for quantum error correction, and we explore how the conditions simplify in some special cases. The bridge between quantum cryptography and quantum error correction provided by complementary quantum channels motivates the study of a new, more general definition of quantum error correcting code, and we initiate this study here. We also consider the concept of complementarity for the general notion of private quantum subsystem. 

\end{abstract}

\pacs{03.67.Dd, 03.67.Pp; 03.67.Hk}

\maketitle

\section{Introduction}

Quantum information processing and computing promises great advances in computational efficiency and the development of new cryptographic schemes. As such technologies continue to progress, techniques to control and manipulate quantum systems are at the forefront of research endeavours for both theoreticians and experimentalists. Simple concepts in classical computation, such as encoding a classical message using a private key and one-time pad, do not translate easily into the quantum setting. In this paper, we explore the algebraic and physical characteristics of privatizing quantum information, notably using subsystems of larger Hilbert spaces. Private quantum subsystems capture an important class of quantum cryptographic protocols as they take into account the ancillary space and allow for the communication of superpositions of states without requiring a full subspace structure. As such, understanding their underlying structure could have an important impact on the development of future cryptographic schemes.

Additionally, the known relationship between quantum privacy and error correction as complementary in the case of operator subsystem codes \cite{KKS08} suggests that progress in the understanding of quantum privacy could lead to further progress in quantum error correction, and vice-versa. In this work, we examine this notion of complementarity in more detail, expanding upon the results presented in~\cite{JKLP12}, and show that in the general setting the complementarity between privacy and error correction breaks down, but that the expected complementarity is recovered when one considers a certain larger Hilbert space and Stinespring dilation. Furthermore, inspired by the results on private quantum subsystems, a more general definition of quantum error correcting code was presented in~\cite{JKLP12}. Here we initiate an investigation of this generalized notion, and in particular we show how it is related to operator quantum error correction subsystem codes \cite{KLP05,KLPL06}.

This paper is structured as follows: In Section~\ref{sec:PrivateQuantumSubsystems} we review the different notions of quantum privacy, and summarize the results presented in~\cite{JKLP12} as well as provide the proofs of the main results presented therein on the contrast between private subspaces and subsystems. In Section~\ref{sec:TestableConditions} we review the algebraic conditions for private quantum subsystems, present the proof of the theorem presented in~\cite{JKLP12}, and illustrate the conditions in special cases and examples. Section~\ref{sec:Complementarity} discusses complementarity, showing how a straightforward generalization of the complementarity theorem of \cite{KKS08} fails, but that it is recovered on a larger Hilbert space. Section~\ref{sec:GenOQECC} revisits the general definition of quantum error correcting code presented in~\cite{JKLP12} and compares it to operator quantum error correction. We conclude in Section~\ref{sec:Conclusion} with possible applications, open questions and future directions of research on this subject.

\section{Private Quantum Subsystems In The Absence Of Private Subspaces}
\label{sec:PrivateQuantumSubsystems}

Here we consider the most general known notion of a private quantum code, which involves the encoding of quantum bits into subsystems. Given a Hilbert space $S$ representing our system, we call the Hilbert space $A$ (or $B$) a subsystem of $S$  if we can write $S$ as  $S=M \oplus M^\perp$, where $M$ is a subspace of $S$ that has a specified tensor decomposition as $M = A \otimes B$.  
Given a quantum system $S$, the sub\emph{spaces} of $S$ can
be viewed as subsystems $B$ for which $A$ is one-dimensional. A subscript such as $\sigma_B$ means the operator belongs to the class of linear operators of the subsystem $B$, denoted $\mathcal{L}(B)$. A quantum channel on $S$ is a completely positive trace-preserving map $\Phi$ on $\mathcal{L}(S)$.

The question of privacy for subsystems then becomes: Is there a subsystem $A$ that is private for $\Phi$? 
Formally, we have the following definitions:

\begin{definition}
A subsystem $B$ is a \emph{private subsystem} for $\Phi$ if there is a $\rho_0 \in \mathcal{L}(S)$ and $\sigma_A \in \mathcal{L}(A)$ such that
\begin{eqnarray}\label{eq:private_definition}
\Phi (\sigma_A \otimes \sigma_B) = \rho_0 \quad \forall \sigma_B\in \mathcal{L}(B).
\end{eqnarray}

A subsystem $B$ is an \emph{operator private subsystem} for $\Phi$ if for all $\sigma_A \in \m L (A)$ there is a $\rho_0= \rho_0 (\sigma_A) \in \mathcal{L}(S)$ such that
\begin{eqnarray}\label{op_private_defn}
\Phi (\sigma_A \otimes \sigma_B) = \rho_0 \quad \forall \sigma_B\in \mathcal{L}(B).
\end{eqnarray}
\end{definition}
Operator private subsystems are precisely the private subsystems that are complementary to operator quantum error-correcting subsystems, as discussed in \cite{KKS08}.
If a channel $\Phi$ has an operator private (or operator error-correcting) subsystem, the map $\Phi$ becomes a product of channels on the individual subsystems $A$ and $B$ when
restricted to the combined product subspace $A \otimes B$. That is, since the output is independent of the inputs of both subsystems, the channel acts as $\Phi (\sigma_A \otimes \sigma_B)=\Phi_A(\sigma_A)\otimes \Phi_B(\sigma_B)\,\,\,  \forall \sigma_A\in \m L(A), \forall \sigma_B\in \mathcal{L}(B)$. Observe that such private subsystems cannot exist
without the existence of private subspaces; indeed, if equation ~(\ref{op_private_defn})
holds, it follows that every sub\emph{space}
$\ket{\psi}\otimes B$ is private for $\Phi$ for any fixed pure state $\ket{\psi}$ on $A$.

The general definition equation (\ref{eq:private_definition}) was introduced in \cite{AMTW00,BR03} under the moniker {\it private quantum channels} and called {\it completely private subsystems} in \cite{BRS04,BHS05}. However, previous examples that had appeared in the literature \cite{AMTW00,BR03,BRS04,BHS05} had either been of operator type, or were in fact already private subspaces. In \cite{JKLP12} we found the first example of a private subsystem that exists even in the absence of private subspaces (and so in particular is not of operator type). We present the full details of this analysis below. 

We began our investigation in~\cite{JKLP12} with a very basic channel: a channel with an equally weighted distribution of the identity~$I$ and Pauli~$Z$ as the Kraus operators of the channel, $\Lambda ( \rho) = (\rho + Z \rho Z)/2$. As argued in~\cite{JKLP12}, it is clear that no such channel can privatize a single qubit of information as it preserves all $Z$~basis information encoded into the state. However, could a multi-qubit version of such a channel; that is, a channel whose action is the above phase damping channel on each individual qubit $i$, denoted  $\Lambda = \Lambda_n \circ \hdots \circ \Lambda_1$, exhibit a private encoding of information? The answer to this question, with regard to private quantum subspaces, is given in the following general result.

\begin{theorem}\label{thm:RUCnoPS}
Let $\Phi$ be a random unitary channel with mutually commuting Kraus operators.
Then $\Phi$ has no private subspaces.
\label{lem:RandomUnitaryChannel}
\end{theorem}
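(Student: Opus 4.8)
The plan is to exploit the fact that a commuting family of unitaries can be simultaneously diagonalized, and then to show that such a channel leaves every diagonal entry of a density operator (in the common eigenbasis) untouched. Write $\Phi(\rho)=\sum_i p_i\, U_i\rho U_i^\dagger$ with $p_i>0$ and $\sum_i p_i=1$, dropping any zero-weight terms. The Kraus operators are $\sqrt{p_i}\,U_i$, and these mutually commute precisely when the unitaries $U_i$ do. A mutually commuting family of unitaries consists of normal operators that are simultaneously unitarily diagonalizable, so there is an orthonormal basis $\{\ket{e_k}\}$ of $S$ and unimodular scalars $\lambda_{ik}$ with $U_i\ket{e_k}=\lambda_{ik}\ket{e_k}$ and $|\lambda_{ik}|=1$ for all $i,k$.

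The key computation comes next. For any $\rho\in\mathcal{L}(S)$ and any basis vector $\ket{e_k}$, I would compute
\begin{equation}
\bra{e_k}\Phi(\rho)\ket{e_k}=\sum_i p_i\,\bra{e_k}U_i\rho U_i^\dagger\ket{e_k}=\sum_i p_i\,|\lambda_{ik}|^2\,\bra{e_k}\rho\ket{e_k}=\bra{e_k}\rho\ket{e_k},
\end{equation}
using $\bra{e_k}U_i=\lambda_{ik}\bra{e_k}$, $U_i^\dagger\ket{e_k}=\overline{\lambda_{ik}}\ket{e_k}$, $|\lambda_{ik}|=1$, and $\sum_i p_i=1$. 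Hence $\Phi$ preserves every diagonal entry of $\rho$ in the common eigenbasis $\{\ket{e_k}\}$.

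With this invariant in hand, suppose toward a contradiction that $C\subseteq S$ is a private subspace of dimension at least two, so that $\Phi(\sigma)=\rho_0$ for a single fixed $\rho_0$ and every state $\sigma$ supported on $C$. The diagonal-preservation identity forces $\bra{e_k}\sigma\ket{e_k}=\bra{e_k}\rho_0\ket{e_k}$ for all $k$ and all such $\sigma$; that is, every state supported on $C$ has one and the same diagonal $(d_k)_k$ in the eigenbasis. I would then pick orthonormal $\ket{\phi_1},\ket{\phi_2}\in C$ and test this against the four states $(\ket{\phi_1}\pm\ket{\phi_2})/\sqrt{2}$ and $(\ket{\phi_1}\pm i\ket{\phi_2})/\sqrt{2}$, all of which lie in $C$. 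Matching their diagonals with $(d_k)_k$ kills the cross terms and yields $\bk{e_k}{\phi_1}\,\overline{\bk{e_k}{\phi_2}}=0$ for every $k$, so $\ket{\phi_1}$ and $\ket{\phi_2}$ have disjoint support in the eigenbasis. But disjoint supports are incompatible with $|\bk{e_k}{\phi_1}|^2=d_k=|\bk{e_k}{\phi_2}|^2$ unless $d_k=0$ for all $k$, contradicting $\sum_k d_k=1$.

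The whole argument is short once the common eigenbasis is fixed, and I expect the main obstacle to be this last step rather than the diagonal computation: one must rule out \emph{every} subspace of dimension $\geq 2$, and the cleanest route is the superposition test above, showing that two orthonormal vectors cannot share a common positive diagonal profile. A degenerate case worth noting separately is a one-dimensional $C$, which supports only a single state and is thus trivially (and uninterestingly) private; the content of the theorem is really that no subspace capable of carrying a qubit can be private for such a $\Phi$.
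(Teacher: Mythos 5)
Your proof is correct and follows the same overall architecture as the paper's: simultaneous diagonalization of the commuting unitaries, the observation that $\Phi$ preserves every diagonal entry in the common eigenbasis, and then a superposition argument showing that two orthonormal vectors in a putative private subspace cannot share the same diagonal profile. Where you diverge is in the endgame. The paper only compares the diagonals of $\ket{0_L}$, $\ket{1_L}$ and their sum: after a global phase normalization making the first coefficients of both vectors equal to $|\beta_1|$, the state $(\ket{0_L}+\ket{1_L})/\sqrt{2}$ has first coefficient of modulus $\sqrt{2}\,|\beta_1|$, forcing $|\beta_1|=0$, and this is iterated coefficient by coefficient (with fresh phase normalizations at each stage). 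You instead test all four superpositions $(\ket{\phi_1}\pm\ket{\phi_2})/\sqrt{2}$ and $(\ket{\phi_1}\pm i\ket{\phi_2})/\sqrt{2}$, which kills both the real and imaginary parts of the cross term $\bk{e_k}{\phi_1}\overline{\bk{e_k}{\phi_2}}$ simultaneously and yields disjoint supports in one step; combined with the equal-modulus condition $|\bk{e_k}{\phi_1}|^2=|\bk{e_k}{\phi_2}|^2=d_k$ this immediately forces $d_k=0$ for all $k$, contradicting normalization. Your version is somewhat cleaner: it avoids the induction and the repeated rephasing bookkeeping, and it makes explicit that the obstruction is the incompatibility of disjoint supports with identical positive diagonals. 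Both arguments are sound and prove the same statement.
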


\begin{proof}
Let $\Phi$ be a random unitary channel with mutually commuting Kraus operators 
described by
\begin{eqnarray}
\Phi (\rho) = \sum_i p_i U_i \rho U_i^{\dagger}\quad \forall \rho.
\end{eqnarray}
Since the  unitaries $ U_i $ are mutually commuting, there exists a common eigenbasis $\{\ket{e_j}\}_{j=1}^d$ for all of the unitaries such that,
\begin{eqnarray}
U_i \ket{e_j} = \alpha_{ij} \ket{e_j} \text{ with } \quad |\alpha_{ij}| = 1.
\end{eqnarray}
Suppose a non-trivial private subspace $C$ exists. Then there must exist at least two pure states $\ket{0_L}$, $\ket{1_L}$ such that $\Lambda (\ket{0_L}\bra{0_L}) = \Lambda( \ket{1_L}\bra{1_L}) = \rho_0$, where $\rho_0$ is some fixed density matrix. Then for some scalars $\beta_j, \gamma_j\in \C$, we can write
\begin{eqnarray}
\ket{0_L} = \sum_{j=1}^d \beta_j \ket{e_j}, \quad \ket{1_L} = \sum_{j=1}^d \gamma_j \ket{e_j}.
\end{eqnarray}
Consider the action of the channel on these  states:
\begin{eqnarray}
\Phi( \ket{0_L}\bra{0_L} ) &= \sum_i p_i U_i \Big( \sum_{j,k=1}^d \beta_j \beta_k^* \kb{e_j}{e_k} \Big) U_i^{\dagger} \nonumber \\
&= \sum_i p_i  \sum_{j,k=1}^d \alpha_{ij} \alpha_{ik}^* \beta_j \beta_k^* \kb{e_j}{e_k}  \nonumber \\
&=  \sum_{j,k=1}^d \Big( \sum_i p_i  \alpha_{ij} \alpha_{ik}^* \Big) \beta_j \beta_k^* \kb{e_j}{e_k}
\end{eqnarray}
Similarly,
\begin{eqnarray}
\Phi ( \ket{1_L}\bra{1_L} ) =  \sum_{j,k=1}^d \Big( \sum_i p_i  \alpha_{ij} \alpha_{ik}^* \Big) \gamma_j \gamma_k^* \ket{e_j}\bra{e_k}.
\end{eqnarray}
Comparing the diagonal terms, where $j=k$, the inside sum over $i$ is always equal to 1 since the modulus of the eigenvalues is 1, thus the respective coefficients are $|\beta_j|^2$ and $|\gamma_j|^2$. Therefore, if the  output of the channel is the same  in both cases, we must have  $|\beta_j| = |\gamma_j|$, $\forall j$. (Observe that this is independent of the orthogonality of $\ket{0_L}$ and $\ket{1_L}$; any two basis states mapped by $\Phi$ to the same state would satisfy this coefficient condition.)  However, we prove below that no such $\ket{0_L}$ and $\ket{1_L}$ can form a subspace. 
Indeed, we can write 
\begin{eqnarray}
\ket{0_L} &= \sum_{j=1}^d\beta_j\ket{e_j}=|\beta_1|\ket{e_1}+\sum_{j=2}^d|\beta_j|e^{i\theta_j}\ket{e_j}\\
\ket{1_L} &= \sum_{j=1}^d\gamma_j\ket{e_j}=|\beta_1|\ket{e_1}+\sum_{j=2}^d|\beta_j|e^{i\phi_j}\ket{e_j},
\end{eqnarray}
where we have, without loss of generality, performed a global phase shift on the two vectors so that the coefficient of $\ket{e_1}$ is real \emph{for both vectors} (under a global phase shift, the vectors remain orthogonal). We have relabelled the coefficients to reflect the fact that  $|\beta_j|=|\gamma_j|$.
 
Any linear combination of the basis states must additionally be in $C$ by the closure of the subspace under scalar addition. With this in mind, consider the normalized state,
\begin{eqnarray}
\dfrac{\ket{0_L}+\ket{1_L}}{\sqrt{2}} &= \dfrac{2|\beta_1| \ket{e_1} +\sum_{j=2}^d|\beta_j|(e^{i \theta_j} + e^{i \phi_j})\ket{e_j}}{\sqrt{2}}.
\end{eqnarray}
Since such a state must be an element of $C$, it must satisfy the conditions on the moduli of its coefficients; namely, the $j$th coefficient must be equal in modulus to $|\beta_j|$. However, one can clearly see that the modulus of the coefficient of the $\ket{e_1}$ term is equal to $\sqrt{2} |\beta_1|$ which is not equal to $|\beta_1|$ unless $|\beta_1|=0$. Therefore, we have reduced the basis states to have the form,
\begin{eqnarray}
\ket{0_L} &=  |\beta_2|\ket{e_2}+\sum_{j=3}^d|\beta_j|e^{i\theta'_j}\ket{e_j} \\
\ket{1_L} &=  |\beta_2|\ket{e_2}+\sum_{j=3}^d|\beta_j|e^{i\phi'_j}\ket{e_j} ,
\end{eqnarray}
where we have performed a global phase shift on both states and redefined the phase on the components $\ket{e_j}, 3\leq j\leq d$. By the same argument as above, we can show that all coefficients must be equal to zero in order for the channel $\Lambda$ to be private while $C$ remains a subspace. As such, there does not exist two orthonormal basis vectors satisfying the requirements for the channel to be private, implying that no non-trivial subspace~$C \subset S$ exists. 

\end{proof}

\begin{corollary}
Let $S$ be $n$-qubit Hilbert space. Then there exists no subspace~$C \subset S$ where $dim(C) \ge 2$ such that $C$ is private for the channel $\Lambda=\Lambda_n\circ\Lambda_{n-1}\circ\cdots\circ \Lambda_1$.
\end{corollary}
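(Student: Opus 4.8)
The plan is to observe that the composed channel $\Lambda$ itself satisfies the hypotheses of Theorem~\ref{thm:RUCnoPS}, after which the corollary is immediate. The whole argument therefore reduces to exhibiting $\Lambda$ as a random unitary channel whose Kraus operators mutually commute.

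First I would write down the Kraus operators of $\Lambda$ explicitly. Each single-qubit factor $\Lambda_i(\rho) = \tfrac{1}{2}(\rho + Z_i \rho Z_i)$, with $Z_i$ denoting Pauli~$Z$ acting on the $i$th qubit, has Kraus set $\{\tfrac{1}{\sqrt{2}} I, \tfrac{1}{\sqrt{2}} Z_i\}$. Since the Kraus operators of a composition are precisely the products of the individual Kraus operators, a short induction on $n$ shows that $\Lambda = \Lambda_n \circ \cdots \circ \Lambda_1$ has the $2^n$ Kraus operators $2^{-n/2} Z^s$ indexed by binary strings $s = (s_1, \ldots, s_n) \in \{0,1\}^n$, where $Z^s = Z_1^{s_1} \cdots Z_n^{s_n}$, each appearing with equal weight $2^{-n}$.

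Next I would verify the two hypotheses directly on this Kraus set. Each $Z^s$ is a tensor product of the unitaries $I$ and $Z$, hence is itself unitary, so $\Lambda$ is a random unitary channel. Moreover each $Z^s$ is diagonal in the computational basis $\{\ket{e_j}\}$, with $Z^s\ket{e_j} = \pm\ket{e_j}$, and diagonal operators commute pairwise; thus the Kraus operators of $\Lambda$ are mutually commuting. Theorem~\ref{thm:RUCnoPS} then applies and yields that $\Lambda$ has no private subspaces, which is exactly the assertion that no subspace $C \subset S$ with $\dim(C) \ge 2$ is private for $\Lambda$.

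I anticipate no genuine obstacle here: the only point requiring attention is the bookkeeping in the induction that identifies the Kraus operators. Even that is routine, since all the operators in sight are simultaneously diagonal in the computational basis, so both the random unitary condition and the commutativity condition hold automatically.
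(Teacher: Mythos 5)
Your proposal is correct and follows exactly the paper's route: the paper's own proof is the one-line observation that all Kraus operators of $\Lambda$ are tensor products of $I_2$ and $Z$, so Theorem~\ref{thm:RUCnoPS} applies. You have simply filled in the (routine) bookkeeping of identifying the Kraus set $\{2^{-n/2}Z^s\}_{s\in\{0,1\}^n}$ and checking unitarity and commutativity, which is a faithful expansion of the same argument.
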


\begin{proof}
All Kraus operators are tensor products of $I_2$ and $Z$, and thus Theorem \ref{thm:RUCnoPS} applies. 
\end{proof}


For our phase-damping channel $\Lambda$, it is impossible to find a non-trivial subspace of encoded  qubits that can be completely stored in their off-diagonal elements (recall that $\Lambda$ fixes the diagonal elements of any input state). However, as we show below, one can find a single-qubit quantum subsystem code, even in the two-qubit case $n=2$, that is private for $\Lambda$. 
This highlights the difference between private quantum subspaces and subsystems. We provide details of this encoding presently.

Consider the following logically encoded qubits in two-qubit Hilbert space:
\begin{eqnarray}\label{subsystem_encoding}
\rho_L=\frac14(II+\alpha XX + \beta YI+\gamma ZX). 
\end{eqnarray}
This describes a single qubit encoding, as equation ~(\ref{subsystem_encoding}) describes the coordinates for a logical Bloch sphere in two-qubit Hilbert space with logical Pauli operators given by $X_L=XX, Y_L=YI, Z_L=ZX$. Now, observe that the dephasing map $\Lambda = \Lambda_2 \circ \Lambda_1$ acting on each density operator $\rho_L$ produces an output state that is maximally mixed; that is, $\Lambda(\rho_L) = \frac14 \, II$ for all $\rho_L$. Thus, we see that equation ~(\ref{subsystem_encoding}) yields a private single-qubit code for the
dephasing map $\Lambda$. 

 We claim that this private code can be viewed as a single qubit subsystem embedded inside two qubit space, where the ancilla operator $\sigma_A$, from equation~(\ref{eq:private_definition}), in this case is the single qubit identity operator $I_2$; that is, up to a unitary equivalence the set of operators $\rho_L$ can be seen to generate the operator algebra $I_2 \otimes \mathbb{M}_2$. To see this, it is enough to show that all two-qubit states $\rho$ of the form $\rho~=~\frac14(II~+~\alpha XX~+~\beta YI+\gamma ZX)$ can be sent through appropriate unitary gates to obtain $\rho'$ of the form $\rho'~=~I_2\otimes\frac14(I_2~+~\alpha'X~+~\beta' Y~+~\gamma' Z) $. Since $I, X, Y, Z$ form a basis for $\mathbb{M}_2$, the claim will follow.

We find that an application of the inverse of the $K$-gate,
\[
K=\frac1{\sqrt{2}} \begin{pmatrix}
1 & 1 \\
i & -i
\end{pmatrix}=\frac1{\sqrt{2}} (\ket{0}(\bra{0} + \bra{1}) + i \ket{1}(\bra{0} - \bra{1})),
\]
on the first qubit, and applications of $CNOT_{2,1}$ and $CNOT_{1,2}$, yields the desired transformation. 
Indeed, the composition $CNOT_{1,2} CNOT_{2,1} \big((K^\dagger \otimes I_2)(\cdot)(K\otimes I_2)\big)CNOT_{2,1}CNOT_{1,2}$ acts as:
\begin{eqnarray*}
XX \longmapsto	& YX &\longmapsto	 ZY \longmapsto IY\\
YI \longmapsto     	& ZI  &\longmapsto	 ZZ \longmapsto IZ\\
ZX \longmapsto    	& XX &\longmapsto	 IX \longmapsto IX.
\end{eqnarray*}
Thus, we obtain  $\rho'=\frac14(I_4+\gamma IX+\alpha IY+\beta IZ)$. In particular, by defining the unitary 
\[
U = CNOT_{1,2}\circ CNOT_{2,1}\circ(K^\dagger\otimes I_2)
=\frac{1}{\sqrt{2}}\begin{pmatrix}
1 & 0 & -i& 0\\
0 & 1 & 0 & i\\
0 & 1 & 0 & -i\\
1 & 0 & i & 0
\end{pmatrix}, 
\]
we see the set of operators $U\rho_L U^\dagger$ generate the algebra $I_2 \otimes \mathbb{M}_2$.

Thus, this subsystem encoding fits into the framework of the definition of private quantum subsystem; that is, the subsystem defined by the set of operators $U\rho_L U^\dagger$ is a private subsystem for the channel $\Lambda$. In fact, it is a private subsystem that is not operator private. This follows from Theorem~\ref{lem:RandomUnitaryChannel} together with the complementarity theorem of \cite{KKS08}: If the subsystem was operator private, it would complement an operator quantum error correcting code (discussed in the next section), which would imply the complementary channel also has a correctable subspace code of the same size, and then incorrectly imply that the original channel has a private subspace again by complementarity. For completeness, we will show directly below that this private subsystem is not operator private.

We have shown explicitly that $\m L(C)$  is isomorphic to $ I_2\otimes \M_2$, where $C$ is the set of all private states for $\Lambda$, 
 via $U\m L(C)U^\dagger =I_2\otimes \M_2$. Alternatively, we can consider the modified channel $\Lambda'(\cdot) := U \Lambda (\cdot) U'$. Then the second qubit in the standard computational basis decomposition $A \otimes B$, $A = \C^2 = B$, is private for $\Lambda'$ with $\sigma_A = \frac{1}{2} I_2$. That is, rather than applying the unitary transformation $\rho_L\mapsto U\rho_LU^\dagger$ and sending this resulting state through the channel $\Lambda$, we can modify the Kraus operators of $\Lambda$ by the same unitary $U$ so that $\Lambda'$ is private for $\frac12I_2\otimes \sigma_B$ for any $\sigma_B\in\m L(B)$. In this manner, our example directly fits the definition of private subsystem.

 For any
 \[
 \sigma_A=\begin{pmatrix}
 a_A & b_A \\
 c_A & d_A
 \end{pmatrix}\in \m L(A),\quad
\sigma_B=\begin{pmatrix}
 a_B & b_B \\
 c_B & d_B
 \end{pmatrix}\in \m L(B),
 \]
  we compute
\begin{eqnarray}\label{eq:notpoutput}
\Lambda'(\sigma_A\otimes \sigma_B)= 
\dfrac12
  \begin{pmatrix}
\gamma_{AB} & 0 & 0 & \eta_{AB} \\
 0 & \gamma_{AB}  & \zeta_{AB} & 0\\
 0 & \zeta_{AB} & \gamma_{AB} & 0\\
\eta_{AB} & 0 & 0 & \gamma_{AB} 
 \end{pmatrix}.
 \end{eqnarray}
 where $\gamma_{AB} = a_Aa_B + d_A d_B$, $\eta_{AB} = b_A b_B + c_A c_B$, and $\zeta_{AB} = b_A c_B +c_A b_B$. Note that this output is symmetric in the subsystems $A$ and $B$. In particular, $\Lambda'(\sigma_A\otimes \frac12 I_2)=\frac14\diag(a_A+d_A, a_A+d_A, a_A+d_A, a_A+d_A)$ and $\Lambda'(\frac12 I_2\otimes\sigma_B)=\frac14\diag(a_B+d_B, a_B+d_B, a_B+d_B, a_A+d_B)$. Thus for density matrices $\sigma_A$, $\sigma_B$ we have $\Lambda'(\sigma_A\otimes \frac12I_2)=\Lambda'(\frac12 I_2\otimes\sigma_B)=\frac14 I_4$, and so both the first and second computational basis subsystems are private for $\Lambda'$.\\
 
If we were looking at an operator private subsystem here, the channel $\Lambda'$ would split up into two distinct channels acting on systems $A$ and $B$ respectively. Thus, we would have density matrices $\tau_A$, $\tau_B$ such that  $\Lambda'( \sigma_{A}\otimes  \sigma_{B})=\tau_A\otimes\tau_B$.
 Equating this equation with equation (\ref{eq:notpoutput}), we find the system has no solution for general $\sigma_A, \sigma_B$ (equating components forces $\tau_B$ to be the zero matrix, which then forces $\Lambda'(\sigma_A\otimes \sigma_B)$ to be the zero matrix). Hence, this subsystem is private for $\Lambda'$, but not operator private.

\section{Testable Conditions For Private Quantum Codes}
\label{sec:TestableConditions}



The following theorem was presented in \cite{JKLP12} and gives algebraic conditions on the Kraus operators of a channel that are necessary and sufficient for the existence of a private subsystem. These conditions necessarily involve the eigenvalues and eigenvectors of the fixed $A$ state $\sigma_A$ and the output state $\rho_0$. We prove the result below. 

\begin{theorem}\label{thm:KLsubsyst}
 A subsystem $B$ is private for a channel
$\Phi(\rho) = \sum_i V_i \rho V_i^\dagger$ with fixed $A$
state $\sigma_A$ and output state $\rho_0$ if and only if
there are complex scalars $\lambda_{ijkl}$ forming an isometry matrix $\lambda = (\lambda_{ijkl})$, and 
\begin{eqnarray}\label{testablecondition}
&\sum_m \sqrt{p_k }V_j\ket{\psi_{A,k}} \ket{\psi_{B,m}}  \bra{\psi_{B,m}} \qquad\\
& \qquad \qquad \qquad = \sum_{i,l}\lambda_{ijkl} \sqrt{q_l}\ket{\phi_{l}}\bra{\psi_{B,i}}       
\qquad  \forall \text{ }j, \text{ } k, \nonumber 
\end{eqnarray}
viewing both sides as a matrix representing a linear map preserving trace-distance between operators, where $\ket{\psi_{A,k}}$ ($p_k$) and $\ket{\phi_{l}}$ ($q_l$) are eigenstates (eigenvalues) of
$\sigma_A$ and $\rho_0$ respectively, and $\ket{\psi_{B,i}}$ is an orthonormal basis
for $B$. 
\end{theorem}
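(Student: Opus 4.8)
The plan is to recognize that privacy of $B$ is exactly the statement that the induced map $\sigma_B\mapsto\Phi(\sigma_A\otimes\sigma_B)$ is the constant ``trace-replacement'' channel that discards its input and outputs $\rho_0$, and then to invoke the freedom in the operator-sum representation of a completely positive map. First I would fix $\sigma_A$ and define the completely positive map $\Psi:\m L(B)\to\m L(S)$ by $\Psi(\sigma_B)=\Phi(\sigma_A\otimes\sigma_B)$. The privacy condition in equation~(\ref{eq:private_definition}) asserts that $\Psi(\sigma_B)=\rho_0$ for every density operator $\sigma_B$. Since every element of $\m L(B)$ is a complex-linear combination of density operators (a Hermitian operator is a difference of scaled states, and a general operator is $H_1+iH_2$) and $\Psi$ is linear, this is equivalent to $\Psi(\sigma_B)=\Tr{\sigma_B}\,\rho_0$ for all $\sigma_B\in\m L(B)$. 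This reduction is conceptually the heart of the argument, but technically light.

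Next I would produce Kraus operators for each of the two maps. Using the spectral decomposition $\sigma_A=\sum_k p_k\kb{\psi_{A,k}}{\psi_{A,k}}$ and the factorization $\kb{\psi_{A,k}}{\psi_{A,k}}\otimes\sigma_B=(\ket{\psi_{A,k}}\otimes I_B)\,\sigma_B\,(\bra{\psi_{A,k}}\otimes I_B)$, one obtains $\Psi(\sigma_B)=\sum_{j,k}A_{jk}\,\sigma_B\,A_{jk}^\dagger$ with $A_{jk}=\sqrt{p_k}\,V_j(\ket{\psi_{A,k}}\otimes I_B)$. Inserting the resolution of identity $I_B=\sum_m\kb{\psi_{B,m}}{\psi_{B,m}}$ shows that $A_{jk}$ is exactly the left-hand side of~(\ref{testablecondition}). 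Similarly, writing $\rho_0=\sum_l q_l\kb{\phi_l}{\phi_l}$ and $\Tr{\sigma_B}=\sum_i\bra{\psi_{B,i}}\sigma_B\ket{\psi_{B,i}}$ gives the trace-replacement channel the Kraus operators $B_{li}=\sqrt{q_l}\,\kb{\phi_l}{\psi_{B,i}}$, which are precisely the summands on the right-hand side of~(\ref{testablecondition}).

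Finally, I would apply the theorem on the unitary freedom in operator-sum representations to the two families $\{A_{jk}\}$ and $\{B_{li}\}$, both viewed as operators $B\to S$: they generate the same completely positive map if and only if they are related by scalars $\lambda_{ijkl}$ with $A_{jk}=\sum_{i,l}\lambda_{ijkl}B_{li}$, where $\lambda=(\lambda_{ijkl})$ is an isometry matrix. The isometry requirement is exactly the condition that this change of representation preserve the Hilbert--Schmidt (trace) geometry of operator space, which is the ``linear map preserving trace-distance'' clause in the statement. Since this equality is identical to~(\ref{testablecondition}), combining it with the first-step reduction ($\Psi=\Tr{\cdot}\,\rho_0\iff$ privacy) delivers both directions of the theorem.

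The step I expect to be the main obstacle is pinning down the precise form of the representation freedom so that it matches the indexing and the ``isometry matrix'' assertion exactly. One must track which family is larger, decide whether $\lambda$ is an isometry or a coisometry, and allow padding by zero Kraus operators (after which the relating matrix may be taken unitary), because the two families generically have different cardinalities, namely $|\{A_{jk}\}|=(\text{number of }V_j)\cdot\rank(\sigma_A)$ versus $|\{B_{li}\}|=\rank(\rho_0)\cdot\dim B$. Getting this bookkeeping to line up with the stated multi-index $\lambda_{ijkl}$ is where care is required; the remaining computations are routine.
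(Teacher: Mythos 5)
Your proposal is correct and follows essentially the same route as the paper's proof: factor the private map as the composition $\Phi\circ\Psi$ with $\Psi(\sigma_B)=\sigma_A\otimes\sigma_B$ to get the Kraus family $\{\sqrt{p_k}\,V_j(\ket{\psi_{A,k}}\otimes I_B)\}_{j,k}$, identify the constant map $\sigma_B\mapsto\Tr{\sigma_B}\rho_0$ with Kraus operators $\{\sqrt{q_l}\kb{\phi_l}{\psi_{B,i}}\}_{i,l}$, and invoke the isometry freedom in operator-sum representations. Your closing remarks on the cardinality bookkeeping (isometry versus unitary after padding) match the paper's handling of that point as well.
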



\begin{proof}
 Consider first the left-hand side of the equation (\ref{eq:private_definition}) of  the definition of private quantum subsystem. Let $\Phi:\mathcal{L}(S)\rightarrow\mathcal{L}(S)$ be a quantum channel satisfying this definition. Let $\{V_j\}$ be the Kraus operators of $\Phi$. Consider a spectral decomposition $\sigma_A=\sum_kp_k\ket{\psi_{A,k}}\bra{\psi_{A,k}}$, where $\ket{\psi_{A,k}}$ and $p_k$ are the eigenstates and eigenvalues, respectively, of $\sigma_A$. We can consider the action of $\Phi$ on $\m L(S)$ as the composition of maps $\Phi\circ\Psi(\sigma_B)$, where, for fixed $\sigma_A$, $\Psi:\mathcal{L}(B)\rightarrow\mathcal{L}(S)$ is the map $\sigma_B\mapsto\sigma_A\otimes\sigma_B$. The Kraus operators of $\Psi$ are $\{\sum_m\sqrt{p_k}\ket{\psi_{A,k}}\ket{\psi_{B, m}}\bra{\psi_{B, m}}\}_k$ (the $\sum_m \ket{\psi_{B, m}}\bra{\psi_{B, m}}$ acts trivially on $\sigma_B$, but is necessary to obtain the correct dimension when later acted on by $V_j$).  It follows that the Kraus operators of the composition $\Phi\circ\Psi(\sigma_B)$ are $\{\sum_m\sqrt{p_k}V_j \ket{\psi_{A,k}}\ket{\psi_{B, m}}\bra{\psi_{B, m}}\}_{j,k}$.

On the other hand, the right-hand side of equation (\ref{eq:private_definition}) can be viewed as a quantum channel \[
\sigma_B\mapsto\tr(\sigma_B)\sum_l  q_l \ket{\phi_l }\bra{\phi_l }=\sum_{i,l }q_l \ket{\phi_l }\bra{\psi_{B,i}}\sigma_B\ket{\psi_{B,i}}\bra{\phi_l },\]
 where $\{\ket{\psi_{B,i}}\}$ is an orthonormal basis for the subsystem $B$, and  we have used the fact that $\ket{\phi_l }$ and $q_l $ form a spectral decomposition for $\rho_0$. The Kraus operators of this map are $\{\sqrt{q_l }\ket{\phi_l }\bra{\psi_{B,i}}\}_{i,l}$.

However, the quantum channels described by the left- and right-hand sides of equation (\ref{eq:private_definition}) are equal in that, given an arbitrary input $\sigma_B$, their outputs are equal.  Thus we may use a well-known fact regarding equal CP maps with Kraus operators $\{X_i\}_{i=1}^m$ and $\{Y_j\}_{j=1}^n$, respectively, with $ m\leq n$; that is, they are related via
$X_i=\sum_j\lambda_{ij}Y_j$ for some isometry matrix $\lambda=(\lambda_{ij})$. When $m=n$, $\lambda$ is unitary. It follows immediately from this that for all $j,k$, we have $\sum_m \sqrt{p_k }V_j\ket{\psi_{A,k}} \ket{\psi_{B,m}}\bra{\psi_{B,m}} = \sum_{i,l }\lambda_{ijkl} \sqrt{q_l }\ket{\phi_{l}}\bra{\psi_{B,i}}$,  for some isometry (or, appropriately, unitary) $\lambda$, as desired. 

The converse implication follows by reversing the above steps, or by direct calculation, to show that equation~(\ref{testablecondition}) implies equation (\ref{eq:private_definition}) is satisfied. 
\end{proof}

The conditions of Theorem~\ref{thm:KLsubsyst} are somewhat intricate in the most general case, so it is worthwhile to give further context and discuss some special cases. We note that this result is new even for the special cases of operator private codes and private subspaces, and, via complementarity, the result can thus be viewed as the quantum privacy analogue of the Knill-Laflamme theorem for quantum error-correcting (subspace) codes \cite{KL97} and its operator quantum error correction generalization \cite{KLP05, KLPL06}. However, the most general case covered by Theorem~\ref{thm:KLsubsyst} may have no analogue in quantum error correction. The next two sections discuss this topic in more detail. 

As one would expect, the algebraic conditions can be further simplified in the case of private subspaces; which is captured in the formalism when $A$ is one-dimensional and $B$ is a subspace. In this case, the Theorem statement becomes $V_j P_B = \sum_{i,l} \lambda_{ijl} \sqrt{q_l} \ket{\phi_l}\bra{\psi_{B,i}}$. By taking the inner product of this equation with its complex conjugate, one arrives at the statement $P_B V_{j_1}^\dagger V_{j_2} P_B=\sum_{i_1,i_2,l} q_l \overline{\lambda_{ij_1l}} \lambda_{ij_2l} \ket{\psi_{B,i_1}}\bra{\psi_{B,i_2}}$ for all $j_1$, $j_2$, where $P_B$ is the projector onto the $B$~subspace. 
Here we have a more noticeable connection with the Knill-Laflamme conditions for quantum error correction: $P_B V_{j_1}^\dagger V_{j_2} P_B = \alpha_{j_1,j_2} P_B$, where the $V_j$'s are the Kraus operators of the \emph{error  map} and $P_B$ is the projection onto the \emph{correctable} $B$ subspace.

The algebraic conditions of the theorem can also be simplified in the case that $\rho_0$ is a scalar multiple of a projection, as we now state. 

\begin{corollary}
Suppose the output state $\rho_0$ of a private quantum channel $\Phi=\{V_i\}$ is proportional to a projection: $\rho_0 \propto Q = \sum_k \ket{\psi_k}\bra{\psi_k}$, and $P = \sum_l \ket{\phi_l}\bra{\phi_l}$. It follows that there are scalars $u_{ikl}$ such that for all $i$
                \[
                V_i P = \sum_{k,l} u_{ikl} A_{kl} \quad{\rm where}\quad A_{kl} = \frac{1}{\sqrt{rank(Q)}} \ket{\psi_k}\bra{\phi_l}.
                \]
Here  $A_{kl}$ are the Kraus operators of the channel $X\mapsto\lambda_XP$. 
\end{corollary}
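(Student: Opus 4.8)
The plan is to specialize the already-established private-subspace form of Theorem~\ref{thm:KLsubsyst} and then exploit the degeneracy that the hypothesis $\rho_0\propto Q$ forces on the eigenvalues. Recall from the discussion preceding this corollary that when $A$ is one-dimensional (so that the conclusion reads $V_iP$ rather than involving $\sigma_A$) the necessary and sufficient condition collapses to
\begin{eqnarray*}
V_j P = \sum_{i,l}\lambda_{ijl}\sqrt{q_l}\ket{\phi_l}\bra{\psi_{B,i}},
\end{eqnarray*}
where $P=\sum_i\ket{\psi_{B,i}}\bra{\psi_{B,i}}$ is the projection onto $B$, the $\ket{\phi_l}$ (with eigenvalues $q_l$) diagonalize $\rho_0$, and $(\lambda_{ijl})$ is an isometry. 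The first step is simply to take this specialized identity as the starting point.

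Second, I would use the hypothesis to pin down the $q_l$. Since $\rho_0$ is proportional to the projection $Q=\sum_k\ket{\psi_k}\bra{\psi_k}$ and is a density matrix, it must equal $Q/\rank(Q)$; hence every nonzero eigenvalue is the single constant $q_l=1/\rank(Q)$, its eigenvectors being exactly the $\ket{\psi_k}$. The key observation is that $\sqrt{q_l}=1/\sqrt{\rank(Q)}$ no longer depends on $l$, so it factors out of the sum as a uniform constant. Substituting, and relabeling the degenerate eigenstates $\ket{\phi_l}$ as $\ket{\psi_k}$ and the $B$-basis $\ket{\psi_{B,i}}$ as $\ket{\phi_l}$, yields $V_iP=\sum_{k,l}u_{ikl}A_{kl}$ with $A_{kl}=\frac{1}{\sqrt{\rank(Q)}}\ket{\psi_k}\bra{\phi_l}$, the scalars $u_{ikl}$ absorbing the isometry $\lambda$, which is exactly the asserted form.

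The final step is to confirm the interpretation of the $A_{kl}$ as Kraus operators of the stated channel. I would compute $\sum_{k,l}A_{kl}XA_{kl}^\dagger$ directly: the inner $\bra{\phi_l}X\ket{\phi_l}$ terms collapse to $\Tr{PX}$, the outer $\ket{\psi_k}\bra{\psi_k}$ rebuild $Q$, and the prefactor $1/\rank(Q)$ renormalizes, producing $\lambda_X\,\rho_0$ with $\lambda_X=\Tr{PX}$. I would also verify completeness, namely $\sum_{k,l}A_{kl}^\dagger A_{kl}=P$, so that the $A_{kl}$ are genuinely the Kraus operators of a trace-nonincreasing map supported on $B$.

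I expect the main obstacle to be bookkeeping rather than anything conceptual: the indices $i,j,k,l$ play distinct roles in the theorem and in the corollary, and one must be careful that it is precisely the degeneracy of $\rho_0$ that licenses pulling $\sqrt{q_l}$ out of the sum — without it the coefficients cannot be folded into the single constant defining $A_{kl}$. Two secondary points to settle cleanly are that the scalars $u_{ikl}$ inherit the isometry property from $\lambda$, and reconciling the $P$/$Q$ labelling so that the identified output projection (the support of $\rho_0$, namely $Q$) matches the channel written in the statement.
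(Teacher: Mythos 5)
Your proof is correct and follows exactly the route the paper intends: the corollary is stated there without an explicit proof, but it is meant to be read off from the private-subspace specialization $V_jP_B=\sum_{i,l}\lambda_{ijl}\sqrt{q_l}\ket{\phi_l}\bra{\psi_{B,i}}$ of Theorem~\ref{thm:KLsubsyst} by noting that $\rho_0=Q/\rank(Q)$ forces every nonzero $\sqrt{q_l}$ to equal the single constant $1/\sqrt{\rank(Q)}$, which is precisely your argument (modulo the $\ket{\psi}\leftrightarrow\ket{\phi}$ relabelling you correctly flag). Your closing check that the $A_{kl}$ satisfy $\sum_{k,l}A_{kl}^\dagger A_{kl}=P$ and implement $X\mapsto\Tr{PX}\,\rho_0$ rather than $X\mapsto\lambda_X P$ is also right, and correctly identifies what appears to be a typo in the corollary's statement of the associated channel.
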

%


Thus far in our investigations, most of the physical examples of private codes that we have come across do indeed have a projector output as in this Corollary. Of course, the simplest general class of channels satisfying this condition is the $n$-qubit complete depolarizing channel. In that case, both $P$ and $Q$ are the maximally mixed state, and the result indicates that any family of Kraus operators for the map will arise as linear combinations, where the scalars are precisely defined with the right balance to induce privacy, of the matrix units $\ket{i}\bra{j}$. Another simple (non-unital) example is provided by the spontaneous emission channel. In the single qubit case, the extremal channel from this class is given by $\Phi(\rho) = \ket{0}\bra{0}$ for all single qubit $\rho$. Here $P$ is the maximally mixed state and $Q= \ket{0}\bra{0}$, and the result simply states that any Kraus operators for $\Phi$ must be balanced multiples of  $\ket{0}\bra{0}$ and $\ket{0}\bra{1}$. 

As a more intricate example in the most general (non-subspace, non-operator) case of a private code, we point out how the 2-qubit phase damping channel $\Lambda$ can be viewed from the perspective of this result. The eigenstates of $\rho_0=\frac14 I_4$ are $\ket{00}, \ket{01}, \ket{10}, \ket{11}$, each having eigenvalue $\frac14$. For simplicity, we will use the standard orthonormal basis on the subsystem $B$: $\{\bra{\psi_{B,i}}\}=\{\bra{0},\bra{1}\}$. 
In our example, $\sigma_A=\frac12 I_2$, hence its eigenstates are $\{\ket{\psi_{A,k}}\}=\{\ket{0}, \ket{1}\}$, with corresponding eigenvalues $\frac12$.

Using the Kraus operators $\{V_j\}=\{\frac{1}{2} II, \frac{1}{2} XX, \frac{1}{2} ZZ,  -\frac{1}{2} YY\}$ of $\Lambda'$, we compute $V_j \ket{\psi_{A,k}}$ as follows:
\begin{eqnarray*}
V_1 \ket{\psi_{A,1}}&=&\frac1{2}\begin{pmatrix} 1& 0 \\ 0& 1\\ 0&0\\0&0\end{pmatrix}=\frac12\left(\ket{00}\bra{0}+\ket{01}\bra{1}\right)\\
V_1 \ket{\psi_{A,2}}&=&\frac1{2}\begin{pmatrix}  0&0\\0&0\\ 1& 0 \\ 0& 1\end{pmatrix}=\frac1{2}\left(\ket{10}\bra{0}+\ket{11}\bra{1}\right)\\
V_2 \ket{\psi_{A,1}}&=&\frac1{2}\begin{pmatrix} 0& 0 \\ 0& 0\\ 0&1\\ 1&0\end{pmatrix}=\frac1{2}\left(\ket{10}\bra{1}+\ket{11}\bra{0}\right)\\ 
V_2 \ket{\psi_{A,2}}&=&\frac1{2}\begin{pmatrix}  0&1\\ 1&0\\0& 0 \\ 0& 0\end{pmatrix}=\frac1{2}\left(\ket{00}\bra{1}+\ket{01}\bra{0}\right)\\
V_3 \ket{\psi_{A,1}}&=&\frac12\begin{pmatrix} 1& 0 \\ 0& -1\\ 0&0\\0&0\end{pmatrix}=\frac1{2}\left(\ket{00}\bra{0}-\ket{01}\bra{1}\right)\\ 
V_3 \ket{\psi_{A,2}}&=&\frac12\begin{pmatrix}  0&0\\0&0\\-1& 0 \\ 0& 1\end{pmatrix}=\frac1{2}\left(-\ket{10}\bra{0}+\ket{11}\bra{1}\right)\\
V_4 \ket{\psi_{A,1}}&=&\frac12\begin{pmatrix} 0&0\\0&0\\0& -1 \\ 1& 0\end{pmatrix}=\frac1{2}\left(-\ket{10}\bra{1}+\ket{11}\bra{0}\right)\\ 
V_4 \ket{\psi_{A,2}}&=&\frac12\begin{pmatrix}  0& 1 \\ -1& 0\\ 0&0\\0&0\end{pmatrix}=\frac1{2}\left(\ket{00}\bra{1}+\ket{01}\bra{0}\right).
\end{eqnarray*}

Note that the $V_j$  are $4\times 2$ matrices formed with $2\times 2$ Pauli operators and zero blocks. 
Recall that we can consider both the left-hand and right-hand side of equation~(\ref{testablecondition})
as quantum channels. Moreover, the Kraus operators $\{X_i\}, \{Y_j\}$ of equal quantum channels are related via
$X_i=\sum_j\lambda_{ij}Y_j$ for some isometry $\lambda=(\lambda_{ij})$. When  the number of Kraus operators $X_i$ is equal to the number of Kraus operators $Y_i$, $\lambda$ is unitary. In this case, $\sqrt{p_k}=\frac1{\sqrt{2}}$ for all $k$, and each $V_j$ has a factor of $\frac12$, so the coefficient of the left-side of this equation is always $\frac1{2\sqrt{2}}$. The coefficient of $\lambda_{ijkl} \sqrt{q_l }\ket{\phi_{l}}\bra{\psi_{B,i}}$ is $\lambda_{ijkl} \sqrt{q_l }=\frac1{\sqrt{2}}\cdot \frac1{2}$ for all $i,j,k,l $.

Thus in our example, we   find that $\lambda$ is the following matrix:
\[
\lambda=\frac1{\sqrt{2}}\begin{pmatrix} 1& 0 & 0 & 1&0 & 0 & 0&0 \\0 & 0 & 0& 0& 1&0 & 0 & 1\\0 & 0 & 0&0&0&1 & 1& 0\\0&1& 1 & 0 & 0& 0 & 0 & 0\\ 1 & 0 & 0&-1&0 & 0 & 0&0\\0 & 0 & 0&0 & -1 & 0&0&1\\0 & 0 & 0&0 & 0 & -1&1&0\\0&1&-1&0 & 0 & 0 & 0 & 0\end{pmatrix}.
\]
The scalar matrix $\lambda=(\lambda_{ijkl})$ is indeed an isometry. Furthermore, because the number of  operators $\sum_m\sqrt{p_k }V_j\ket{\psi_{A,k}} \ket{\psi_{B,m}}\bra{\psi_{B,m}}$ agrees with the number of operators $\ket{\phi_{l}}\bra{\psi_{B,i}}$ (namely, 8), the matrix $\lambda$ is in fact unitary.

\section{Extension of Complementarity}
\label{sec:Complementarity}

\subsection{Connection to quantum error correction}

The Stinespring dilation theorem \cite{Sti55}, gives the standard operational description of a quantum channel: Every channel $\Phi$ on a Hilbert space $S$ can be described by an environment Hilbert space $E$, a pure state $|\psi \rangle $ on $E$, and a unitary operator $U$ on the composite $%
S\otimes E $ as follows: $\Phi(\rho )=\tr_{E}\big(%
U(\rho \otimes \ket{\psi}\bra{\psi})U^\dagger\big).$ Tracing out the system $S$ instead yields a complementary channel: $\Phi^{\sharp}(\rho )=\tr_{S}\big(U(\rho \otimes
\ket{\psi}\bra{\psi})U^\dagger\big).$ The uniqueness (up to conjugation by a partial isometry) of the Stinespring dilation allows us to talk of ``the'' complementary channel $\Phi^\sharp$ for a given channel $\Phi$ \cite{Hol06,KMNR07}.

The complementarity theorem of \cite{KKS08} shows that a subsystem code is operator quantum error-correcting for a channel if and only if it is operator private for the corresponding complementary channel. One can then ask if this complementarity theorem extends to the setting of general private quantum subsystems and some more general notion of quantum error correcting code. We show this is not the case in the following discussion, which focusses on the class of phase damping examples considered above. However, in the general discussion that follows, we show how a modified view of the associated dilations recaptures the complementarity result. 

For our phase damping channel $\Lambda$, we can compute the Kraus operators of the complementary channel $\Lambda^\sharp$ by ``stacking'' the $j$-th column of each of the eight Kraus operators $V_i$ of $\Lambda$ one below the next, to obtain the $j$-th Kraus operator of  $\Lambda^\sharp $: 
\begin{eqnarray*}
A_1&=&\frac12\begin{pmatrix}
1 &0 & 0 & 0\\
1 &0 & 0 & 0\\
1 &0 & 0 & 0\\
1 &0 & 0 & 0\\
\end{pmatrix}
\quad
A_2=\frac12\begin{pmatrix}
 0 & 1 & 0 & 0\\
 0 & -1& 0 & 0\\
 0 & 1& 0 & 0\\
 0 & -1& 0 & 0\\
\end{pmatrix}
\\ 
A_3&=&\frac12\begin{pmatrix}
0 & 0 & 1 & 0\\
0 & 0 & 1 & 0\\
0 & 0 &-1& 0\\
0 & 0 &-1& 0\\
\end{pmatrix}
\quad
A_4=\frac12\begin{pmatrix}
 0 &0 & 0& 1\\
 0 & 0 & 0 & -1\\
 0 & 0 & 0& -1\\
 0 &  0 & 0& 1\\
\end{pmatrix}.
\end{eqnarray*}
We now ask what is the behaviour of the complementary channel on the subsystem $B$ paired with the fixed state $\sigma_A$; that is, we compute how $\Phi^\sharp$ acts on operators $\sigma_A\otimes\sigma_B$ for all $\sigma_B$. Again, we must be careful: this pairing, which in this case we can identify with the algebra $I_2\otimes \M_2$, is private for $\Lambda'$, and so we wish to test $I_2\otimes \M_2$ on $(\Lambda^\sharp)'$, where we obtain $(\Lambda^\sharp)'$ by applying the unitary transformation $U(\cdot)U^\dagger$, with $U=CNOT_{1,2}CNOT_{2,1}(K^\dagger\otimes I_2)$, as before.

We compute the Kraus operators of $(\Lambda^\sharp)'$ to be $\{B_i=UA_iU^\dagger\}$, where 

\begin{eqnarray*}
B_1&=&\frac14\begin{pmatrix}
1-i &0 & 0 & 1-i\\
1+i &0 & 0 & 1+i\\
1-i &0 & 0 & 1-i\\
1+i &0 & 0 & 1+i\\
\end{pmatrix}
\\
B_2&=&\frac14\begin{pmatrix}
 0 & 1-i & 1-i & 0\\
 0 & -1-i& -1-i & 0\\
 0 & -1+i& -1+i & 0\\
 0 & 1+i&1+i & 0\\
\end{pmatrix}
\\
B_3&=&\frac14\begin{pmatrix}
1-i & 0 & 0 & -1+i\\
-1-i & 0 & 0 & 1+i\\
1-i & 0 & 0 & -1+i\\
-1-i & 0 & 0 & 1+i\\
\end{pmatrix}
\\
B_4&=&\frac14\begin{pmatrix}
 0 & -1+i & 1-i & 0\\
 0 & -1-i& 1+i & 0\\
 0 & 1-i& -1+i & 0\\
 0 & 1+i&-1-i & 0\\\end{pmatrix}.
\end{eqnarray*}

Now, for any $ \sigma_B=\begin{pmatrix} a& b\\c & d\end{pmatrix}\in \m L(B)$, we find 
\[
(\Lambda^\sharp)'\left(\frac12 I_A\otimes \sigma_B\right)=\sum_iB_i\left(\frac12 I_A\otimes \sigma_B\right)B_i^\dagger=\frac14 I_4.
\]

Far from being correctable on the algebra $I_2\otimes \M_2$, the complementary channel $(\Lambda^\sharp)'$ (with the proper unitary transformation) is completely depolarizing. All information is lost, so there is no possibility of the channel being correctable in any sense. In fact, note in this case that the Kraus operators of the complementary map~$\Lambda^{\sharp}$ are four orthogonal rank-one projectors in two-qubit Hilbert space, and in particular the map determines a von Neumann measurement.

However, one can rightly ask if the dephasing map $\Lambda$ and its complementary map are both private, where does the quantum information go? Figure~\ref{fig:DephasingExtension} illustrates the isometric extension of the dephasing channel, along with the encoding of the information from the algebra~$I_2 \otimes \M_2$ to a state of the form of equation~(\ref{subsystem_encoding}).

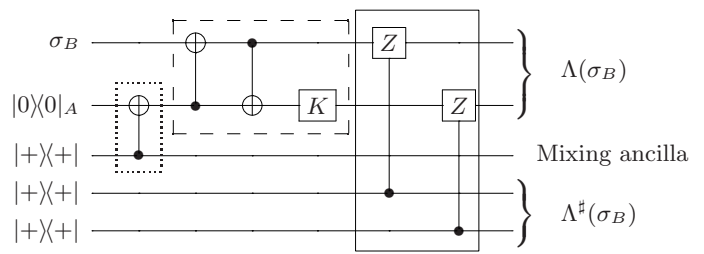
\begin{figure}[htbp]
\begin{align*}
\Qcircuit @C=1.5em @R=1.3em {
\lstick{\sigma_B} & \qw & \targ & \ctrl{1} & \qw & \gate{Z} & \qw &  \qw & \rstick{\raisebox{-3em}{$\Lambda(\sigma_B)$}} \\
\lstick{\kb{0}{0}_A} & \targ & \ctrl{-1} & \targ & \gate{K} & \qw & \gate{Z}  & \qw & \rstick{} \\
\lstick{\kb{+}{+}} & \ctrl{-1} & \qw & \qw & \qw & \qw  &  \qw & \qw & \rstick{\hspace{-1em} \text{Mixing ancilla}} \\
\lstick{\kb{+}{+}} & \qw & \qw & \qw & \qw & \ctrl{-3} & \qw  &  \qw & \rstick{\raisebox{-2.5em}{$\Lambda^\sharp(\sigma_B)$}}\\
\lstick{\kb{+}{+}} & \qw & \qw & \qw & \qw & \qw & \ctrl{-3}  &  \qw & \rstick{}
\gategroup{1}{8}{2}{8}{1.0em}{\}} \gategroup{4}{8}{5}{8}{1.0em}{\}}
\gategroup{2}{2}{3}{2}{1.0em}{.} \gategroup{1}{3}{2}{5}{1.0em}{--} \gategroup{1}{6}{5}{7}{1.4em}{-}
}
\end{align*}
\caption{Isometric extension of the two-qubit dephasing channel $\Lambda = \Lambda_2 \circ \Lambda_1$. The extension of the channel to be a unitary transformation is given in the solid box (---) by introducing ancilla qubits 4 and~5. The dashed box (-- --) contains the encoding of $\rho$ of the form in equation~(\ref{subsystem_encoding}) into the algebra $I_2 \otimes \M_2$. The dotted box ($\cdots$) gives a particular preparation of the mixed state for the subsystem encoding, using a ``mixing ancilla" that is traced out for both the channel~$\Lambda$ and its complementary channel~$\Lambda^{\sharp}$.}
\label{fig:DephasingExtension}
\end{figure}

As Figure~\ref{fig:DephasingExtension} shows, the isometric extension of the channel to a larger Hilbert space, where the state evolution is described by unitary evolution, can be achieved using two extra qubits. Moreover, in order to purify the mixed state used in the subsystem encoding, one could use an additional ``mixing ancilla". Such an ancilla would be traced out both for the dephasing channel~$\Lambda$ and its complementary channel~$\Lambda^{\sharp}$. By definition of the unitary extension of the channel, the channel mapping~$\Lambda(\rho)$ can be obtained by tracing out the final two qubits as well as the mixing ancilla. The complementary channel is obtained by tracing out qubits 1~and~2 as well as the mixing ancilla. As shown above, both of these outputs are private. 

However, what if one had access to the information stored in the mixing ancilla? The role of this state is to twirl the second qubit to obtain a mixed state, however, if one now had access to this state the overall evolution of the channel is no longer on a subsystem encoding, but rather it would be on a subspace encoding that included the mixing ancilla itself. Define $U_{\Lambda}$ as the full unitary evolution described in Figure~\ref{fig:DephasingExtension}, and let $\tilde{\rho} = \rho \otimes \kb{0}{0} \otimes \kb{+}{+} \otimes \kb{0}{0} \otimes \kb{0}{0}$ be the input state into the unitary evolution. The dephasing channel is given by $\text{Tr}_{345} (U_{\Lambda} \tilde{\rho} U_{\Lambda}^\dagger) $, where we trace out qubits 3 (the mixing ancilla), 4, and 5. If we now look at the output on registers $345$, 
the channel~$\text{Tr}_{12} (U_{\Lambda}\tilde{\rho} U_{\Lambda}^\dagger) $ must describe an operator quantum error-correcting code as it is the complement to a channel that is an operator private quantum channel. It is clear that this is not the complementary channel in the sense of the isometric extension, as we are adding on an extra level of operations, namely the mixing of the qubit $\ket{0}$ using a mixing ancilla (and then considering the mixing ancilla as qubit 1 of a larger Hilbert space). However, in this modified notion of the complementary channel we find the quantum information that was lost. We can thus conclude that the mixing ancilla plays an important role in the perseverance of global quantum information, and that the information must be stored in the correlations between this space and one of the two output spaces. 

More generally, given a private subsystem $\m{L}(B)$ for a channel~$\Phi$ (with  fixed mixed state $\sigma_A \in \m{L}(A)$), one can formalize the notion of a correctable complementary channel in a similar fashion. Let $\sigma_A = \sum_{i=1}^N p_i \kb{\psi_i}{\psi_i}_A$ and  define a mixing ancillary Hilbert space $\m{L}(M)$ containing $N$ basis states. The mixing ancillary space $\m{L}(M)$, as in the example, is used to apply a controlled unitary operation $U_i$ based on the state $\ket{\varphi}_A$ in $\m L(A)$,  and the unitaries $U_i$ are chosen such that $U_i (\ket{i}_M\ket{\varphi}_A)  = \ket{i}_M\ket{\psi_i}_A $. That is, 
\begin{align*}
\sum_ip_iU_i\ket{i}_M\kb{\varphi}{\varphi}_A\bra{i}_MU_i^\dagger
& = \sum_ip_i\ket{i}_M\kb{\psi_i}{\psi_i}_A\bra{i}_M\\
&= \sum_ip_i\kb{i}{i}_M\otimes \kb{\psi_i}{\psi_i}_A.
\end{align*} 
Since $\sigma_A=\tr_M\big(\sum_ip_i\kb{i}{i}_M\otimes \kb{\psi_i}{\psi_i}_A\big)$, we find 
\begin{align*}
\sigma_A&=\tr_M\big(\sum_ip_iU_i\ket{i}_M\ket{\varphi}\bra{\varphi}_A\bra{i}_MU_i^\dagger\big)\\
&= \tr_M \big( U_{MA} \kb{\Theta}{\Theta}_M \otimes \kb{\varphi}{\varphi}_A U_{MA}^{\dagger} \big), 
\end{align*}
where $\ket{\Theta}_M = \sum_i \sqrt{p_i} \ket{i}_M$ is a chosen pure state for the mixing ancilla such that the $U_{MA}$ performs the appropriate unitary transformation $U_{MA} = \sum_i \kb{i}{i}_M \otimes \kb{\psi_i}{\varphi}_A$.
 The private quantum subsystem channel can then be expressed as follows:
\begin{align}
&\Phi(\sigma_A \otimes \sigma_B) = \sum_j A_j (\sigma_A \otimes \sigma_B) A_j^{\dagger} \nonumber \\
&\text{ }=\text{Tr}_K \Big( U_\Phi (\sigma_A \otimes \sigma_B \otimes \kb{\zeta}{\zeta}_K ) U_\Phi^{\dagger} \Big) \nonumber \\
&\text{ }=\tr_{MK} \Big( U_\Phi U_{MA}( \kb{\Theta}{\Theta}_M \otimes \kb{\varphi}{\varphi}_A \nonumber\\
& \qquad \qquad \qquad \qquad \qquad \otimes \sigma_B \otimes \kb{\zeta}{\zeta}_K ) U_{MA}^{\dagger} U_\Phi^{\dagger} \Big).
\label{eq:PrivateExtension}
\end{align}
The transformation within the parenthesis is a unitary transformation, as $U_\Phi$ is a unitary defined by the isometric extension of the channel $\Phi$ (that is, by Stinespring's dilation theorem), where we have introduced the ancillary system~$K$ to form the isometric extension with~$\ket{\zeta}_K$ being a fixed pure state. The unitary $U_{MA}$ corresponds to the transformation in order to prepare a mixed state $\sigma_A$, after tracing out over the mixing ancillary space~$M$. Since the transformation within the brackets is a unitary transformation, if the output state of the channel contains no information about the input state $\sigma_B$, the quantum information must be completely contained in the  traced out subsystem: the $MK$ subsystem. That is, if one traced out the output space, and we were left with the $MK$ subsystem, such an output would necessarily be correctable since all quantum information is contained in that system. That is to say, the generalized conjugate channel
\begin{align}
&\tilde{\Phi}(\sigma_A \otimes \sigma_B) =\text{Tr}_{A\otimes B} \Big( U_\Phi (\sigma_A \otimes \sigma_B \otimes \kb{\zeta}{\zeta}_K ) U_\Phi^{\dagger} \Big) \nonumber \\
& \qquad =\text{Tr}_{AB} \Big( U_\Phi U_{MA}( \kb{\Theta}{\Theta}_M \otimes \kb{\varphi}{\varphi}_A \nonumber \\
& \qquad \qquad \qquad \qquad \qquad \otimes \sigma_B \otimes \kb{\zeta}{\zeta} ) U_{MA}^{\dagger} U_\Phi^{\dagger} \Big)
\label{eq:GeneralizedConjugate}
\end{align}
has the feature that $B$ is error-correctable for it. The generalized form of a private quantum subsystem can thus be summarized as a unitary transformation on an extended Hilbert space by the circuit in Figure~\ref{fig:GeneralExtension}.
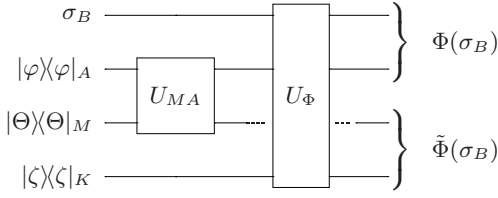
\begin{figure}[htbp]
\begin{align*}
\Qcircuit @C=1.3em @R=1.3em {
\lstick{\sigma_B} & \qw & \multigate{3}{U_{\Phi}} &  \qw & \rstick{\raisebox{-3em}{$\Phi(\sigma_B)$}} \\
\lstick{\kb{\varphi}{\varphi}_A} & \multigate{1}{U_{MA}} & \ghost{U_{\Phi}}  & \qw & \rstick{} \\
\lstick{\kb{\Theta}{\Theta}_M} & \ghost{U_{MA}} & \push{\hdashrule{1em}{0.4pt}{1pt}\hspace{2.5em}\hdashrule{1em}{0.4pt}{1pt}}  \qw & \qw & \rstick{\raisebox{-3em}{$\tilde{\Phi}(\sigma_B)$}} \\
\lstick{\kb{\zeta}{\zeta}_K} & \qw & \ghost{U_{\Phi}}  &  \qw & \rstick{}
\gategroup{1}{4}{2}{4}{1.0em}{\}} \gategroup{3}{4}{4}{4}{1.0em}{\}}
}
\end{align*}
\caption{Generalized form of extending a private quantum subsystem channel to a unitary transformation via Stinespring's dilation theorem. The subsystem $\sigma_A \otimes \sigma_B$ that encodes the arbitrary state of quantum information~$\sigma_B$ is prepared by entangling an ancillary pure state mixing ancilla~$\kb{\Theta}{\Theta}_M$ with a chosen pure state~$\kb{\varphi}{\varphi}_A$ via the unitary~$U_{MA}$ and tracing out over the mixing ancilla space~$M$. This operation corresponds to the dotted box in Figure~\ref{fig:DephasingExtension}. The action of the private quantum channel~$\Phi$ can also be extended to a unitary transformation over a larger Hilbert space, as described by the action of~$U_\Phi$ on systems $ABK$ by introducing the ancillary state~$\kb{\zeta}{\zeta}_K$, as described in Eq.~\ref{eq:PrivateExtension}. The unitary transformation~$U_\Phi$ corresponds to the dashed and solid boxes in Figure~\ref{fig:DephasingExtension}. The complementary channel is defined on the output space of the extension of the channel~$\Phi$, and therefore corresponds to the final state on system~$K$, yet in general will not be quantum error correctable for an arbitrary subsystem channel. However, a generalized conjugate channel~$\tilde{\Phi}$ can be constructed on the Hilbert space~$MK$, as described in Eq.~\ref{eq:GeneralizedConjugate}, which will necessarily be a quantum error correctable channel since the overall extension is a subspace channel.}
\label{fig:GeneralExtension}
\end{figure}

If we consider the action of the private subsystem channel~$\Phi$ via the isometric extension~$U_{\Phi}$, then the Kraus operators of the original channel can be expressed as follows (without the extension to the mixing ancilla space~$M$):
\begin{align*}
\Phi(\sigma_B) &= \tr_K \big( U_\Phi (\sigma_A \otimes \sigma_B \otimes \kb{\zeta}{\zeta}_K) U_\Phi^{\dagger} \big)\\
&= \sum_i \bra{i}_K  U_\Phi (\sigma_A \otimes \sigma_B \otimes \kb{\zeta}{\zeta}_K) U_\Phi^{\dagger} \ket{i}_K \\
&= \sum_i P_i (\sigma_A \otimes \sigma_B \otimes \kb{\zeta}{\zeta}_K) P_i^{\dagger},
\end{align*}
where the Kraus operators of the channel~$\Phi$ are given by~$\{P_i = \bra{i}_K U_\Phi \}_i$. In a similar manner, the Kraus operators of the complementary channel~$\Phi^{\sharp}$ are given as follows:
\begin{align*}
\Phi^\sharp(\sigma_B) &= \tr_{AB} \big( U_\Phi (\sigma_A \otimes \sigma_B \otimes \kb{\zeta}{\zeta}_K) U_\Phi^{\dagger} \big)\\
&= \sum_j Q_j (\sigma_A \otimes \sigma_B \otimes \kb{\zeta}{\zeta}_K) Q_j^{\dagger},
\end{align*}
where the Kraus operators are given by~$\{Q_j = \bra{j}_{AB} U_\Phi \}_j$. Finally, in order to extend the input space to be a subspace, rather than a subsystem, the ancillary mixing state is introduced. Defining the generalized complementary channel~$\tilde{\Phi}$ as above, the Kraus operators of this channel can be defined on the extended Hilbert space as follows:
\begin{align*}
\tilde{\Phi}(\sigma_B) &= \tr_{AB} \big( U_\Phi U_{MA} (\kb{\Theta}{\Theta}_M \otimes \kb{\varphi}{\varphi}_A \\
&\qquad \qquad \qquad \otimes \sigma_B \otimes \kb{\zeta}{\zeta}_K) U_{MA}^{\dagger} U_\Phi^{\dagger} \big)\\
&= \sum_k \bra{k}_{AB} U_\Phi U_{MA} (\kb{\Theta}{\Theta}_M \otimes \kb{\varphi}{\varphi}_A \\
&\qquad \qquad \qquad \otimes \sigma_B \otimes \kb{\zeta}{\zeta}_K) U_{MA}^{\dagger} U_\Phi^{\dagger} \ket{k}_{AB} \\
&= \sum_k (I_M \otimes Q_k) U_{MA} (\kb{\Theta}{\Theta}_M \otimes \kb{\varphi}{\varphi}_A \\
&\qquad \qquad \qquad \otimes \sigma_B \otimes \kb{\zeta}{\zeta}_K) U_{MA}^{\dagger} (I_M \otimes Q_k)^{\dagger}\\
&= \sum_k R_k (\kb{\Theta}{\Theta}_M \otimes \kb{\varphi}{\varphi}_A \\
&\qquad \qquad \qquad \otimes \sigma_B \otimes \kb{\zeta}{\zeta}_K) R_k^{\dagger},
\end{align*}
where the Kraus operators of the generalized complementary channel~$\tilde{\Phi}$ mapping to the~$MK$~Hilbert space are related to the Kraus operators of the complementary channel~$\Phi^\sharp$ via the relationship~$\{ R_k = (I_M \otimes Q_k) U_{MA} \}_k$. As outlined above, this channel must be quantum error correcting, and as such must satisfy the Knill-Laflamme conditions~\cite{KL97}:
\begin{align*}
\bra{i}_B R_p^{\dagger} R_q \ket{j}_B &= \bra{i}_B U_{MA}^{\dagger} (\id_K \otimes Q_p^{\dagger}) (\id_K \otimes Q_q) U_{MA} \ket{j}_B \\
&=  \bra{i_{U_{MA}}}_{MA} (\id_K \otimes Q_p^{\dagger}) (\id_K \otimes Q_q)  \ket{j_{U_{MA}}}_{MA} \\
&= \delta_{ij} c_{pq}.
\end{align*}
For any generalized private subsystem channel there must be the existence of a higher dimensional Hilbert space such that the above Knill-Laflamme conditions for quantum error correcting hold for a set of Kraus operators related to the Kraus operators of the complementary channel of the original private subsystem channel.

\subsection{Generalized channels on subspace and subsystem encodings}
A common theme throughout this work has been that encoding into a subsystem, rather than a subspace, generates an increased freedom in the types of channels that can be used to privatize quantum information. In this section we explore this notion further, explicitly showing that the set of unitaries that can be used to privatize quantum information in a subsystem code are inherently richer than those for subspace codes. We shall focus on the case of encoding a single qubit of information into either a two-qubit subspace or a two-qubit subsystem.

Consider an arbitrary encoding of a single qubit into a two-qubit subspace:
\begin{align*}
(\alpha \ket{0} + \beta \ket{1}) \ket{0} \rightarrow \alpha \ket{0_L}_{12} + \beta{\ket{1_L}_{12}},
\end{align*}
where $\ket{0_L}$ and~$\ket{1_L}$ represent the logically encoded states in a higher dimensional Hilbert space. An arbitrary CPTP map can be described as a transformation of the encoded basis states to a larger dimensional Hilbert space, after which a trace is taken in the environment. Any arbitrary transformation can be described as follows:
\begin{align*}
\ket{0_L}_{12} \rightarrow \sum_{ij} \ket{ij}_{12} \ket{E_{ij}^0}_E \\
\ket{1_L}_{12} \rightarrow \sum_{ij} \ket{ij}_{12} \ket{E_{ij}^1}_E,
\end{align*}
where the states $\ket{E_{ij}^m}$ are arbitrary environment states, that are not necessarily normalized or orthogonal. The environment states are the states on the ancillary qubits when expressing the final state in the computational basis of the first two qubits. Tracing out over the environment states, the resulting entries of the two-qubit mixed states have a particular form:
\begin{align}
&\kb{ij}{kl}_{12} \tr_E \big(\alpha \ket{E_{ij}^0} + \beta \ket{E_{ij}^1})(\alpha^* \bra{E_{kl}^0} + \beta^* \bra{E_{kl}^1}\big)  \nonumber \\
=& \kb{ij}{kl}_{12} \tr_E \big( |\alpha|^2 \kb{E_{ij}^0}{E_{kl}^0} + \alpha \beta^* \kb{E_{ij}^0}{E_{kl}^1} \nonumber \\
&\qquad \qquad \qquad + \alpha^* \beta \kb{E_{ij}^1}{E_{kl}^0} + |\beta|^2 \kb{E_{ij}^1}{E_{kl}^1} \big) , 
\label{eq:PureConditions}
\end{align}
this imposes a set of conditions on the environmental states in order for the output on the first two qubits to be private, namely the terms after tracing out can yield no information about the input state as described by~$\alpha$ and~$\beta$.

Consider now the same isometric extension mapping along with the inclusion of a third qubit that will serve as a mixing ancilla. The encoding operation is now generalized to a three qubit encoding, which upon tracing out the mixing ancilla will return the subsystem encoding on qubits 1~and~2. The generalized mapping is modified to include the third qubit.
\begin{align*}
\ket{0_L}_{123} \rightarrow \sum_{ij} \ket{ijk}_{123} \ket{E_{ijk}^0}_E \\
\ket{1_L}_{123} \rightarrow \sum_{ij} \ket{ijk}_{123} \ket{E_{ijk}^1}_E.
\end{align*}
The generalized form of the mixed state entries on the first two qubits thus have the form
\begin{align}
&\kb{ij}{kl}_{12} \tr_{3E} \Big( \big( \sum_p \alpha \ket{p}_3\ket{E_{ijp}^0} + \beta \ket{p}_3\ket{E_{ijp}^1}\big) \nonumber \\
& \qquad \qquad \qquad \big( \sum_q \alpha^* \bra{q}_3 \bra{E_{klq}^0} + \beta^* \bra{q}_3 \bra{E_{klq}^1} \big) \Big) \nonumber \\
=& \kb{ij}{kl}_{12} \tr_{E} \Big(  |\alpha|^2 ( \kb{E_{ij0}^0}{E_{kl0}^0} + \kb{E_{ij1}^0}{E_{kl1}^0}) \nonumber \\
&\qquad \qquad \qquad + \alpha \beta^* (\kb{E_{ij0}^0}{E_{kl0}^1} + \kb{E_{ij1}^0}{E_{kl1}^1} ) \nonumber \\
&\qquad \qquad \qquad + \alpha^* \beta (\kb{E_{ij0}^1}{E_{kl0}^0} + \kb{E_{ij1}^1}{E_{kl1}^0}) \nonumber \\
&\qquad \qquad \qquad + |\beta|^2 (\kb{E_{ij0}^1}{E_{kl0}^1} + \kb{E_{ij1} ^1}{E_{kl1}^1})  \Big), 
\label{eq:MixedConditions}
\end{align}
therefore, by comparing equations~\ref{eq:PureConditions}~and~\ref{eq:MixedConditions}, we find that in the case where a mixing ancilla has been introduced the set of conditions upon privatizing the output on the first two qubits is looser in terms of the environment states. Namely there is a freedom in choosing the environment states such that certain terms can cancel out to yield no information; this freedom does not exist in the case of a pure state encoding. We explore these set of conditions in more detail in Appendix~\ref{app:Conditions}.

\section{Quantum Error Correction Revisited}
\label{sec:GenOQECC}

In this Section, we revisit the notion of an operator quantum error correctable subsystem and its parallels to private quantum subsystems. We begin with the definition of an operator quantum error-correcting code \cite{KLP05,KLPL06}. 

\begin{definition}\label{defn:OQECC}
Let $S=(A\otimes B)\oplus(A\otimes B)^\perp$ and let $\m E$ be a channel acting on $\m L(S)$. Then 
 $B$ is an \emph{operator quantum error correcting code (OQECC)} for $\mathcal E$ if there exists a quantum channel $\mathcal R$ such that for all $\sigma_A$, for all $\sigma_B$, there exists  some fixed state $\tau_A=\tau_A(\sigma_A)$ (dependent on $\sigma_A$) such that
 \[
 \mathcal R \circ \mathcal E (\sigma_A \otimes \sigma_B) = \tau_A \otimes \sigma_B.
 \]

\end{definition}

The discussion of private subsystem versus operator private subsystem in this work and \cite{JKLP12} motivates the following observation: The notion of an operator quantum error-correcting subsystem can be expanded to mimic the general definition of a private quantum subsystem. We proposed the following definition in \cite{JKLP12}, which can be seen as the QEC analogue of equation~\eqref{eq:private_definition}:

\begin{definition}\label{defn:GenOQECC}
Let $S=(A\otimes B)\oplus(A\otimes B)^\perp$ and let $\m E$ be a channel acting on $\m L(S)$. Then 
 $B$ is a \emph{generalized operator quantum error correcting code (GenOQECC)} for $\mathcal E$ if there exists a quantum channel $\mathcal R$ for which there exists  a fixed state  $\sigma_A$ and a state $\tau_A=\tau_A(\sigma_A)$ (dependent on $\sigma_A$) such that for all $\sigma_B$, we have
 \[
 \mathcal R \circ \mathcal E (\sigma_A \otimes \sigma_B) = \tau_A \otimes \sigma_B.
 \]
Clearly no generality is lost in this definition by setting $\tau_A = \sigma_A$. 
\end{definition}

\begin{ex}
{\rm 
Consider the following example of a generalized operator quantum error correcting code. Let 
\[
\sigma_A = (1-4p)\kb{0000}{0000} + p \sum_{\text{wt}(x)=1}\kb{x}{x}
\] 
be the fixed ancilla state, a mixed 4-qubit state, where the states $\ket{x}$ are the set of (four) computational basis states with Hamming weight 1. The weighting~$p$ can be thought of as a probability of failure of preparing a desired ground state~$\ket{0}$ for the purpose of error correction, where we have omitted higher order~$p$ terms. Let $\sigma_B$ be any single qubit state. The encoding of the subsystem code is a controlled operation from qubit~$B$ which targets all qubits of the state~$\sigma_A$ with a controlled-$X$, we shall call such an encoding operation~$U_{AB}$.
The error map will be the probabilistic application of an $X$ error on any of the 5 encoded qubits given by the set of Kraus operators~$\{ \sqrt{\epsilon_i}X_i \}_{i=0}^5$, where~$\epsilon_i$ is the probability of the error~$X_i$ occurring ($X_0$ denoting the identity operation). The application of such an error map will produce the following mapping on the encoded state for an arbitrary~$B$ state $\ket{\psi}_B = \alpha \ket{0} + \beta \ket{1}$,
\begin{align*}
&\m E \big( U_{AB} (\sigma_A \otimes \kb{\psi}{\psi}_B) U_{AB}^{\dagger} \big) \\
&= \m E \Big(  |\alpha|^2 \big( (1-4p)\kb{00000}{00000} + p \sum_{\text{wt}(x)=1} \kb{x0}{x0} \big) \\
& \qquad +  \alpha \beta^* \big( (1-4p)\kb{00000}{11111} + p \sum_{\text{wt}(x)=1} \kb{x0}{\overline{x}1} \big) \\
& \qquad +  \beta \alpha^* \big( (1-4p)\kb{11111}{00000} + p \sum_{\text{wt}(x)=1} \kb{\overline{x}1}{x0} \big) \\
& \qquad +  |\beta|^2 \big( (1-4p)\kb{11111}{11111} + p \sum_{\text{wt}(x)=1} \kb{\overline{x}1}{\overline{x}1} \big) \Big) \\
&=  \sum_{i=0}^5 \Big(  |\alpha|^2 \big( (1-4p)X_i\kb{0}{0}^{\otimes 5}X_i \\
& \qquad + p \sum_{\text{wt}(x)=1} X_i\kb{x0}{x0}X_i \big) \\
& \text{ } +  \alpha \beta^* \big( (1-4p)X_i\kb{0}{1}^{\otimes 5} X_i + p \sum_{\text{wt}(x)=1} X_i\kb{0}{\overline{x}1}X_i \big) \\
& \text{ } +  \beta \alpha^* \big( (1-4p)X_i\kb{1}{0}^{\otimes 5} X_i + p \sum_{\text{wt}(x)=1} X_i\kb{\overline{x}1}{x0}X_i \big) \\
& \text{ } +  |\beta|^2 \big( (1-4p)X_i\kb{1}{1}^{\otimes 5} X_i + p \sum_{\text{wt}(x)=1} X_i\kb{\overline{x}1}{\overline{x}1}X_i \big) \Big), 
\end{align*}
where we have defined $\ket{\overline{x}} = X^{\otimes 5} \ket{x}$. One can notice that the error map will flip at most one bit. This is important as the encoded~$\ket{0}$ terms have weight~0~or~1 for all terms, while the encoded~$\ket{1}$ have weight~4~or~5. This means that after the application of the error map, the encoded~$\ket{0}$ will have a weight between 0~and~2, while the encoded~$\ket{1}$ will have weight between 3~and~5. The recovery operation will then perform a weight check using measurement in the computational basis, associating all states with weight~$\le 2$ to an encoded~$\ket{0}$ and all states with weight~$\ge 3$ to an encoded~$\ket{1}$ state. As such all~$X_i$ errors are corrected. Since this error correction procedure works for an arbitrary pure state encoding of~$\sigma_B$, it will necessarily work for the full set of states in $\m L(B)$. That is, $B$ is a generalized operator quantum error correcting code for $\m E$. 


It is worth noting that the error correction procedure does not work if we chose the ancillary mixed state to be outside the set of states of weight 0~or~1. Consider a particular example of a 4-qubit state of weight~2, given by $\sigma_A= \kb{1100}{1100}_A$. We shall show that encoding using such an ancillary state will not correct the error map for a particular choice of $\kb{\psi}{\psi}_B = \kb{0}{0}_B$. The action of the error map is as follows:
\begin{align*}
& \m R \circ \m E \big( U_{AB} \kb{11000}{11000} U_{AB}^{\dagger} \big) \\
&=\m R \circ \m E \big(\kb{11000}{11000} \big) \\
&=\m R \big( \sum_{i=0}^5 \epsilon_i X_i \kb{11000}{11000} X_i \big) \\
&= (\epsilon_0 + \epsilon_1 + \epsilon_2) \tau_{A,0} \otimes \kb{0}{0}_B + (\epsilon_3 + \epsilon _4 + \epsilon_5) \tau_{A,1} \otimes \kb{1}{1}.
\end{align*}
The recovery operator maps all states that correspond to either no error or errors on the first 2 qubits to the correct state~$\kb{0}{0}$, this is since the action of the error map returns a state of weight~1~or~2. However, for an error that occurs on qubits 3 through~5, the state state before the action of the recovery operator is now of weight~3, which will then be mapped to the state~$\kb{1}{1}$ by the definition of the action of recovery operator. Thus, as long as there is a non-zero probability of an error on the last~3 qubits, the action of the error map will result in the recovery of an incorrect state, as the state does not have the form~$\tau_A \otimes \sigma_B$. Similarly, for any choice of ancillary state $\sigma_A$ of weight greater or equal to 2 there will exist a state $\kb{\psi}{\psi}$ that will result in faulty error correction.

Therefore, we know that the GenOQECC corrects for the error map $\{ \sqrt{\epsilon_i}X_i \}_{i=0}^5$ for the given fixed state, and will not be error correcting for ancillary states with weight greater or equal to 2. However, it is worth noting that as long as only one of the qubits has a preparation error (therefore weight~1), the value of $p$ does not matter. Thus, the channel $\m E$ is correctable for all $\sigma_B$ and for all $\sigma_A$ of weight~0 or~1; but, it is not correctable for any arbitrary $\sigma_A\in \m L(A)$ (that is any 4-qubit state in the ancillary space). Hence, if we consider the full 4-qubit ancillary Hilbert space $\m L(A)$, we will not have an operator quantum error correcting code (OQEC) on such a space. 

}
\end{ex}


This example shows that a GenOQEC code may not be an OQEC code for a given error map. Nevertheless, the following result shows that whenever a GenOQECC exists, we can still find an OQECC for the map of the same dimension. To find such a code we must consider the ancilla $A$ more carefully.  	
	
\begin{theorem}\label{thm:GOQECtoOQEC}
Given a decomposition $S = (A\otimes B) \oplus (A\otimes B)^\perp$ and
channel $\m E$ on $\m L(S)$, suppose 
there exists $\sigma_A$ and channel $\m R$ on $\m L(S)$ such that for all $\sigma_B$, 
\[
\m R\circ\m E(\sigma_A\otimes\sigma_B) = \sigma_A \otimes \sigma_B.\]
Then there exists $\ket{\alpha}\in A$ and channel $\m R_\alpha$ on $\m L(S)$ such that for all $\sigma_B$,  
\[
\m R_\alpha\circ\m E(\kb{\alpha}{\alpha}\otimes\sigma_B) = \ket{\alpha}\bra{\alpha}\otimes\sigma_B;
\]
in other words, the subspace $\alpha \otimes B$ is an error-correcting code for $\m E$. 
\end{theorem}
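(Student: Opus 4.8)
The plan is to diagonalize the fixed ancilla state, reduce to a single eigenvalue block on which the ancilla looks maximally mixed, show that on that block the recovered channel $\m R\circ\m E$ tears apart as a map on $A$ tensored with the identity on $B$, and then read off a pure code state after a cosmetic correction of the recovery. Write $\sigma_A = \sum_i p_i \kb{\alpha_i}{\alpha_i}$ with $p_i>0$ and $\{\ket{\alpha_i}\}$ orthonormal, and set $\Psi := \m R\circ\m E$. This is a channel on $\m L(S)$ with $\Psi(\sigma_A\otimes X)=\sigma_A\otimes X$ for every $X\in\m L(B)$, since the hypothesis holds for density operators $\sigma_B$ and extends to all $X$ by linearity and self-adjointness.

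First I would establish a block structure. Choosing $\sigma_B$ of full rank, $\sigma_A\otimes\sigma_B$ is a fixed point of $\Psi$ that is faithful on the corner $\m A\otimes B$, where $\m A=\mathrm{supp}(\sigma_A)$; by standard facts about channels with a faithful fixed state, $\Psi$ leaves this corner invariant and its set of fixed points there is a $*$-algebra. That algebra contains every $\sigma_A\otimes X$, hence the $*$-algebra they generate, which (as $\sigma_A$ is self-adjoint) is $\bigoplus_\mu P_\mu\otimes\m L(B)$ for the spectral projections $P_\mu$ of $\sigma_A$. In particular $\Psi$ fixes each $P_\mu\otimes X$. Fixing one eigenspace $\m A_0$ with projection $P_0$, the relation $\Psi(P_0\otimes I_B)=P_0\otimes I_B$ lets me restrict $\Psi$ further to the corner $\m A_0\otimes B$, where it is a \emph{unital} channel fixing $I_{\m A_0}\otimes X$ for all $X$.

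On this corner I would invoke the fact that the fixed points of a unital channel coincide with the commutant of its Kraus operators. Thus every Kraus operator of $\Psi$ lies in $(I_{\m A_0}\otimes\m L(B))'=\m L(\m A_0)\otimes I_B$, i.e.\ has the form $G_c\otimes I_B$, whence $\Psi=\Psi_A\otimes\supid_B$ on $\m A_0\otimes B$ for some channel $\Psi_A$ on $\m L(\m A_0)$. Consequently, for any unit vector $\ket{\alpha}\in\m A_0$ and any $\sigma_B$,
\[
\m R\circ\m E(\kb{\alpha}{\alpha}\otimes\sigma_B)=\Psi_A(\kb{\alpha}{\alpha})\otimes\sigma_B=\tau_A\otimes\sigma_B ,
\]
with $\tau_A:=\Psi_A(\kb{\alpha}{\alpha})$ a fixed state. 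Defining the channel $\m D(\rho):=\kb{\alpha}{\alpha}\otimes\tr_A(\rho)$ on $\m L(A\otimes B)$ and setting $\m R_\alpha:=\m D\circ\m R$ yields $\m R_\alpha\circ\m E(\kb{\alpha}{\alpha}\otimes\sigma_B)=\kb{\alpha}{\alpha}\otimes\sigma_B$ for all $\sigma_B$, which is exactly the assertion that $\alpha\otimes B$ is an error-correcting code.

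I expect the factorization step to be the crux. The essential point, and the reason the statement is not vacuous, is that the given recovery $\m R$ in general does \emph{not} correct the pure code: it is tuned to the mixed state $\sigma_A$ and may rotate the $A$-register (the degenerate case, exemplified by $\sigma_A=\tfrac12 I_2$ in the phase-damping example, is precisely where $\Psi$ can act nontrivially within an eigenspace). One therefore cannot merely restrict the hypothesis to a pure input; the real work is extracting the $\Psi=\Psi_A\otimes\supid_B$ structure, after which the residual $A$-channel is harmless and is absorbed by the trace-and-replace correction $\m D$. The only facts I rely on beyond the excerpt are the $*$-algebra structure of the fixed-point set of a channel with a faithful fixed state and the commutant description of the fixed points of a unital channel.
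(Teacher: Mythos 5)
Your overall architecture (show that the recovered map factorizes on eigenvector inputs, then absorb the residual $A$-channel with a trace-and-replace recovery) matches the paper's, and your final step is fine; but the factorization argument rests on a false lemma. The fixed-point set of a trace-preserving channel with a faithful invariant state is \emph{not} in general a $*$-algebra: it is the fixed-point set of the \emph{unital adjoint} that equals the commutant of the Kraus operators, while the fixed-point set of the channel itself has the form $\bigoplus_k \m L(\mathcal{H}_{k,1})\otimes \rho_{k,2}$ for fixed states $\rho_{k,2}$, which is closed under multiplication only when those states are projections. Concretely, take $\m E=\Phi_A\otimes\supid_B$ and $\m R=\supid$, where $\Phi_A$ is a primitive channel on $A$ whose unique invariant state $\sigma_A$ is full rank and not maximally mixed. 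The hypothesis of the theorem holds, yet $\Psi=\m R\circ\m E$ does not fix $P_\mu\otimes X$ for the spectral projections $P_\mu$ of $\sigma_A$, does not map the eigenspace corners into themselves, and is not unital on them; every subsequent step of your argument --- the restriction to the corner, the unitality, the appeal to the commutant description of fixed points --- therefore collapses. (Your final conclusion happens to hold in this example, but not by your reasoning.) Note also that your target is too strong: the correct statement is only that $\m R\circ\m E(\kb{\alpha_k}{\alpha_k}\otimes\sigma_B)=\sigma_k\otimes\sigma_B$ where $\sigma_k$ need not be supported in the eigenspace of $\ket{\alpha_k}$, so there is in general no channel $\Psi_A$ on $\m L(\m A_0)$ to extract.

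The paper obtains the factorization by an elementary operator inequality instead: for a rank-one $P=\kb{\psi}{\psi}$ on $B$, positivity gives $0\le \m R\circ\m E(p_k\kb{\alpha_k}{\alpha_k}\otimes P)\le \m R\circ\m E(\sigma_A\otimes P)=\sigma_A\otimes P=(I_A\otimes P)(\sigma_A\otimes P)(I_A\otimes P)$, and any positive operator dominated by $\sigma_A\otimes P$ must have the form $\sigma_{\psi,k}\otimes P$. A compression argument using $P_1$, $P_2$, $P_\pm$ then shows $\sigma_{\psi,k}$ is independent of $\ket{\psi}$, and linearity extends the identity to all $\sigma_B$, after which the trace-and-replace recovery (your $\m D$) finishes exactly as you wrote. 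If you wish to retain a fixed-point-theoretic route, you would need the correct structure theorem for $\mathcal{F}(\Psi)$ together with an argument that containing all of $\sigma_A\otimes\m L(B)$ constrains the $\rho_{k,2}$ factors appropriately --- substantially more work than the domination argument.
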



\begin{proof}  First let $\ket{\psi}\in B$ and put $P =
\kb{\psi}{\psi}$. Let $\{\ket{\alpha_k}\}$ be the normalized eigenvectors of $\sigma_A$ so that  $\sigma_A=\sum_{k=1}^m
p_k\kb{\alpha_k}{\alpha_k}$ where $0< p_k\leq 1$. By assumption and using the
positivity of $\m R \circ\m E$ we have for all $k$,
\begin{eqnarray*}
0 &\leq& \m R \circ\m E(p_k\kb{\alpha_k}{\alpha_k}\otimes P) \\
&=&p_k\m R \circ\m E( \kb{\alpha_k}{\alpha_k} \otimes P) \\
&\leq& \m R \circ\m E (\sigma_A
\otimes P) \\ 
&=& \sigma_A \otimes P \\   &=& (I_A\otimes P)
(\sigma_A \otimes P)(I_A \otimes P).
\end{eqnarray*}
It follows that there are positive operators $\sigma_{\psi,k}\in
\m L(A)$ such that $\m R \circ\m E(p_k\kb{\alpha_k}{\alpha_k}\otimes P) =
\sigma_{\psi,k} \otimes P$ for all $k$. We can trace-normalize to write $\m R \circ\m E(\kb{\alpha_k}{\alpha_k}\otimes P) =\sigma_{\psi,k} \otimes P$ for all $k$, where $\sigma_{\psi,k}$ are now density operators. 

In fact, the operators $\sigma_{\psi,k}$ do not depend on
$\ket{\psi}$. To verify this claim,  for brevity we shall assume 
$\dim B =2$. The case of general $B$ easily follows. So
let $\ket{\psi_i}$, $i=1,2$, be an orthonormal basis for $B$.
Let $P_i = \kb{\psi_i}{\psi_i}$, $i=1,2$, and put $P_{\pm} =
\kb{\pm}{\pm}$ where $\ket{\pm} = \frac{1}{\sqrt{2}}(\ket{\psi_1}
\pm \ket{\psi_2})$. Fix $\alpha=\alpha_k$. By the above argument,
there are operators $\sigma_{\pm,\alpha}$ and $\sigma_{i,\alpha}$
on $A$ such that
\begin{eqnarray*}
& \m R \circ \m E(\kb{\alpha}{\alpha}\otimes P_{\pm}) = \sigma_{\pm,\alpha}
\otimes P_{\pm} \\
\textnormal{and} \quad & \m R \circ\m E(\kb{\alpha}{\alpha}\otimes P_{i}) =
\sigma_{i,\alpha} \otimes P_{i}.
\end{eqnarray*}
In particular, as $I_B = P_+ + P_- = P_1 + P_2$, we have
\begin{eqnarray*}
 \m E(\kb{\alpha}{\alpha} \otimes I_B) &=&
\sigma_{1,\alpha} \otimes P_{1} + \sigma_{2,\alpha} \otimes P_{2}
\\ &=& \sigma_{+,\alpha} \otimes P_{+} + \sigma_{-,\alpha} \otimes
P_{-}.
\end{eqnarray*}
If we compress this equation by the projection $I_A \otimes
P_1$, we obtain
\begin{align*}
&(I_A\otimes P_1) \m E(\kb{\alpha}{\alpha} \otimes I_B) (I_A
\otimes P_1)= \sigma_{1,\alpha} \otimes P_{1}
\\ &\quad=\frac{1}{2} (\sigma_{+,\alpha} + \sigma_{-,\alpha}) \otimes
P_{1}.
\end{align*}
Thus, $\sigma_{1,\alpha} =  \frac{1}{2} (\sigma_{+,\alpha} +
\sigma_{-,\alpha})$ and since the same identity holds for
$\sigma_{2,\alpha}$ when we compress by $I_A\otimes P_2$, we
obtain $\sigma_{1,\alpha} = \sigma_{2,\alpha}$. There is nothing particularly special about our use of 
$\ket{\pm}$ here, and in fact this argument may be applied to show the same operator is obtained for any pure state on $A$. 

The proof is now completed by a simple linearity argument. Indeed, write $\sigma_k := \sigma_{\psi,k}$, so we have $\m R \circ\m E(\ket{\alpha_k}\bra{\alpha_k}\otimes P_\psi) = \sigma_k \otimes P_\psi$, and by linearity $P_\psi$ can be replaced by an arbitrary $\sigma_B$. We may then choose a channel $\m R_k$ such that $\m R_k(\sigma_k \otimes \sigma_B) = \ket{\alpha_k}\bra{\alpha_k}\otimes \sigma_B$ for all $\sigma_B$. It follows that $\ket{\alpha_k}\otimes B$ is correctable for $\m E$, with a recovery operation given by $\m R_k \circ \m R$. 


%
\end{proof}

We note that the above argument can be adjusted to show that in fact any eigenspace $A'$ for $\sigma_A$ determines an OQEC code (which will be a subsystem when $\dim A' > 1$) for the error map of the same size, via the pairing $A'$ and $B$.

\section{Conclusion}
\label{sec:Conclusion}

Private quantum subsystems are subsystem encodings of quantum information that are privatized under the action of a given channel. In this work we have expanded upon the results presented in~\cite{JKLP12} on private quantum subsystems, by providing proofs of the results therein and expanding the analysis of the main example  presented in~\cite{JKLP12}, namely the multi-qubit dephasing channel. We have added to our analysis showing that the multi-qubit dephasing channel has a private subsystem without exhibiting a private subspace. We explicitly showed that this private subsystem is not operator private, which is the first such example we are aware of. Additionally, we have revisited the set of testable algebraic conditions for private quantum subsystems, expanding the discussion of examples and providing further results for particular forms of the channels and output states. 

One of the surprising structural aspects of the most general private quantum subsystem channels (in contrast to operator private subsystems) is that their complementary channel, obtained through the Stinespring dilation, is no longer necessarily correctable on the subsystem, and can in fact be private. In this work we have given an analysis and discussion of where the quantum information is leaking to in such a setting by dilating to an even higher dimensional Hilbert space than is required by the usual notion of complementary channels.

Finally, the difference between general private subsystems and operator private subsystems presented in ~\cite{JKLP12} suggested there could be an analogous notion of generalized operator quantum error correction. In this work we provided an explicit definition for these codes and showed that the existence of such a code implies the existence of a standard (subspace) QEC code of the same size, determined by the fixed ancillary state used in the generalized code. Thus, the generalized notion does not lead to larger codes than what can be found in standard QEC. However, the same is true when one compares operator QEC to standard QEC; indeed, this is even obvious from the definitions of the two code types in that case. What generated significant interest in OQEC codes beyond the theoretical appeal of the mathematical framework, was that it turned out such codes can have extra features that make them quite valuable for fault tolerant quantum computing. It would be interesting to know whether generalized QEC codes have similar advantages. A next step in the analysis would be to broaden the set of generalized code examples which are neither subspaces nor operator subsystems. We leave these and other investigations on the topic for elsewhere.

\section{Acknowledgments}

T. J.-O. was supported by the Ontario Ministry of Training, Colleges and Universities and the Fonds de recherche du Qu\'ebec -- Nature et technologies. D.W.K. was supported by NSERC and a Guelph University Research Chair. R.L. was supported by NSERC, CIFAR, and Industry Canada. S.P. was partially supported by an NSERC Graduate Scholarship.

\begin{appendix}

\section{Conditions on a generalized private quantum channel}
\label{app:Conditions}
The generalized private quantum channel on a two-qubit subspace encoding of a single qubit of information was shown in Equation~\ref{eq:PureConditions} to have the following density matrix output:
\begin{align*}
&\kb{ij}{kl}_{12} \tr_E \big( |\alpha|^2 \kb{E_{ij}^0}{E_{kl}^0} + \alpha \beta^* \kb{E_{ij}^0}{E_{kl}^1} \nonumber \\
&\qquad \qquad \qquad + \alpha^* \beta \kb{E_{ij}^1}{E_{kl}^0} + |\beta|^2 \kb{E_{ij}^1}{E_{kl}^1} \big). 
\end{align*}
For such an output to be private, the output state must encode no information about the input state of the channel, therefore must yield no information about the coefficients~$\alpha$ and~$\beta$. The cross terms ($\alpha \beta^*$ and~$\alpha^* \beta$) must always be zero as there is no choice of non-zero overlap between the states $\ket{E_{ij}^0}$ and $\ket{E_{kl}^1}$ that will cancel out all information stored in~$\alpha \beta^*$ and its complex conjugate for arbitrary choices of~$\alpha$ and~$\beta$. This should come as no surprise, as the complementary channel should be quantum error correcting for a subspace code, and an overlap of $\ket{E_{ij}^0}$ and~$\ket{E_{kl}^1}$ would violate the Knill-Laflamme conditions.

For any non-zero $|\alpha|^2$ term $\bk{E_{ij}^0}{E_{kl}^0}$, the corresponding $|\beta|^2$ term $\bk{E_{ij}^1}{E_{kl}^1}$ must be the same as otherwise information about the relative magnitude of $\alpha$~and~$\beta$ will be contained in the output density matrix, yielding a non-private channel. Therefore, the set of conditions for a private quantum channel on a subspace encoding can be summarized by the following conditions on the environment states when considering the isometric extension:
\begin{align}
\bk{E_{ij}^0}{E_{kl}^0} &= \bk{E_{ij}^1}{E_{kl}^1}, \label{eq:SubspaceConditions1} \\
\bk{E_{ij}^0}{E_{kl}^1} &= 0, \label{eq:SubspaceConditions2}
\end{align}
for all choices of $i, j, k, l$.

The output density matrix in the case of a subsystem encoding, with respect to an enlarged set of environment states is given in Equation~\ref{eq:MixedConditions} as follows:
\begin{align*}
& \kb{ij}{kl}_{12} \tr_{E} \Big(  |\alpha|^2 ( \kb{E_{ij0}^0}{E_{kl0}^0} + \kb{E_{ij1}^0}{E_{kl1}^0}) \nonumber \\
&\qquad \qquad \qquad + \alpha \beta^* (\kb{E_{ij0}^0}{E_{kl0}^1} + \kb{E_{ij1}^0}{E_{kl1}^1} ) \nonumber \\
&\qquad \qquad \qquad + \alpha^* \beta (\kb{E_{ij0}^1}{E_{kl0}^0} + \kb{E_{ij1}^1}{E_{kl1}^0}) \nonumber \\
&\qquad \qquad \qquad + |\beta|^2 (\kb{E_{ij0}^1}{E_{kl0}^1} + \kb{E_{ij1} ^1}{E_{kl1}^1})  \Big), 
\end{align*}
The increased freedom in choosing this output state to be private comes from the fact that in considering the cross terms ($\alpha \beta^*$ and~$\alpha^* \beta$), while the trace over the set of states corresponding to these terms must be zero, there can be cancellation between the two corresponding terms. Therefore, unlike the case of a subspace encoding, one could have $\bk{E_{ij0}^0}{E_{kl0}^1} \ne 0$, however its corresponding pair must cancel the term out, that is~$\bk{E_{ij0}^0}{E_{kl0}^1} =  - \bk{E_{ij1}^0}{E_{kl1}^1}$. There is additional freedom in the diagonal terms in order for no information about the magnitude of the amplitudes of $\alpha$ and $\beta$ to be present in the output density matrix. The result conditions for privatization are summarized as follows:

\begin{align}
\bk{E_{ij0}^0}{E_{kl0}^0} + \bk{E_{ij1}^0}{E_{kl1}^0} &= \bk{E_{ij0}^1}{E_{kl0}^1} + \bk{E_{ij1}^1}{E_{kl1}^1}, \label{eq:SubsystemConditions1} \\
\bk{E_{ij0}^0}{E_{kl0}^1} &= - \bk{E_{ij1}^0}{E_{kl1}^1}, \label{eq:SubsystemConditions2}
\end{align}
for all choices of $i, j, k, l$.

Applying the above set of conditions to the case of the two-qubit dephasing channel~$\Lambda$ described throughout this work, we can show that there is insufficient freedom in a two-qubit subspace encoding to privatize a single encoded qubit. That is, no two-qubit subspace encoding will satisfy Equations~\ref{eq:SubspaceConditions1}~and~\ref{eq:SubspaceConditions2} for the environment states produced by the two-qubit dephasing channel~$\Lambda$.

Let the following parameters denote an arbitrary two-qubit subspace encoding:
\begin{align*}
\ket{0_L} = \alpha_{00} \ket{00} +  \alpha_{01} \ket{01}  +  \alpha_{10} \ket{10} +  \alpha_{11} \ket{11} \\
\ket{1_L} = \beta_{00} \ket{00} +  \beta_{01} \ket{01}  +  \beta_{10} \ket{10} +  \beta_{11} \ket{11} .
\end{align*}
By the uniqueness of the Stinespring dilation Theorem up to the preparation of the ancillary states, we assume that the form extension of the channel to unitary transformation on a larger Hilbert space by preparing an additional pair of qubits in the $\ket{+}$~state and performing controlled--$Z$ operations on each corresponding physical qubit in the encoding, as described in solid boxed operation in Figure~\ref{fig:DephasingExtension}. The resulting mapping of the logical states is given as follows:
\begin{align*}
\ket{0_L} = \sum_{ij} \alpha_{ij} \ket{ij} \longrightarrow \sum_{ij} \alpha_{ij} \ket{ij} Z^i \ket{+} Z^j \ket{+} \\
\ket{1_L} = \sum_{ij} \beta_{ij} \ket{ij} \longrightarrow \sum_{ij} \beta_{ij} \ket{ij} Z^i \ket{+} Z^j \ket{+},
\end{align*}
where the operation~$Z^i$ is applied to the state~$\ket{+}$ depending on the value of the state on qubit~1, and similarly for~$Z^j$ and qubit~2. The resulting environment states therefore have the form
\begin{align*}
\ket{E_{ij}^0} = \alpha_{ij} Z^i \ket{+} Z^j \ket{+} ,\\
\ket{E_{ij}^1} = \beta_{ij} Z^i \ket{+} Z^j \ket{+}.\\
\end{align*}
The conditions set by Equation~\ref{eq:SubspaceConditions2} impose restrictions on the values of the coefficients in the subspace encodings. Since $\bk{E_{ij}^0}{E_{ij}^1} = \alpha_{ij}^* \beta_{ij} = 0$, this implies either $\alpha_{ij} = 0$ or $\beta_{ij} = 0$. Without loss of generality, suppose $\alpha_{ij} = 0$, then the corresponding condition set by Equation~\ref{eq:SubspaceConditions1} imply~$\bk{E_{ij}^0}{E_{ij}^0} = |\alpha_{ij}|^2 = 0 = |\beta_{ij}|^2 =  \bk{E_{ij}^1}{E_{ij}^1}$. Thus, for all values~$(ij)$ the associated coefficients~$\alpha_{ij}$ and~$\beta_{ij}$ will be equal to~0, implying that no private subspace encoding exists for the dephasing channel~$\Lambda$ that satisfy the set of conditions outlined by Equations~\ref{eq:SubspaceConditions1}--\ref{eq:SubspaceConditions2}.

We now show that the set of conditions on a two-qubit subsystem encoding, by introducing a mixing ancilla, can be satisfied by the chosen encoding given by the first two boxes in Figure~\ref{fig:DephasingExtension}. 
\begin{align*}
\ket{0_L} = \frac{1}{2} ( \ket{000} + i \ket{010} + \ket{101} + i \ket{111}) \\
\ket{1_L} = \frac{1}{2} ( \ket{100} - i \ket{110} + \ket{001} - i \ket{011}) \\
\end{align*}
The resulting mapping as given by the isometric extension of the channel by introducing two ancillary~$\ket{+}$ states and controlled--$Z$ operations will have the form
\begin{align*}
\ket{0_L} = \sum_{ijk} \gamma_{ijk} \ket{ijk} \longrightarrow \sum_{ijk} \gamma_{ijk} \ket{ijk} Z^i \ket{+} Z^j \ket{+} \\
\ket{1_L} = \sum_{ijk} \eta_{ijk} \ket{ijk} \longrightarrow \sum_{ijk} \eta_{ijk} \ket{ijk} Z^i \ket{+} Z^j \ket{+},
\end{align*}
resulting in the environment states
\begin{align*}
\ket{E_{ijk}^0} = \gamma_{ijk}  Z^i \ket{+} Z^j \ket{+} \\
\ket{E_{ijk}^1} = \eta_{ijk}  Z^i \ket{+} Z^j \ket{+}.
\end{align*}
The environment states are orthogonal unless $(ij) = (kl)$, thus the set of conditions~\ref{eq:SubsystemConditions1}--\ref{eq:SubsystemConditions2} will be trivially satisfied unless $(ij) = (kl)$. Therefore let~$(ij) = (kl)$, Equation~\ref{eq:SubsystemConditions1} then becomes
\begin{align*}
|\gamma_{ij0}|^2 + |\gamma_{ij1}|^2 = |\eta_{ij0}|^2 + |\eta_{ij1}|^2.
\end{align*}
Each side of the above equation will have one non-zero term that will be equal to 1 as all the coefficients in the encoding are of equal magnitude, therefore Equation~\ref{eq:SubsystemConditions1} will always be satisfied. The condition set out by Equation~\ref{eq:SubsystemConditions2} will have the following form when $(ij)=(kl)$, 
\begin{align*}
\gamma_{ij0}^* \eta_{ij0} = -\gamma_{ij1}^* \eta_{ij1},
\end{align*}
yet since the logical states have support on differing computational basis states, both sides of the above equation will always be equal to zero as for any $(ijm)$, $\gamma_{ijm}^* \eta_{ijm} = 0$.

\end{appendix}

\end{document}